\definecolor{scmcolor}{RGB}{0,0,255}
\definecolor{scmcolorRed}{RGB}{255,0,0}
\newtheorem{theorem}{Theorem}
\newtheorem*{theorem*}{Theorem}
\newtheorem{definition}[theorem]{Definition}
\newtheorem*{corollary*}{Corollary}
\newtheorem{lemma}[theorem]{Lemma}
\newtheorem*{remark*}{Remark}
\title{
Improved separation between quantum and classical computers for sampling and functional tasks
}
\author{
Simon C. Marshall\thanks{Leiden University.}
\and
Scott Aaronson\thanks{University of Texas at Austin.}
\and
Vedran Dunjko\footnotemark[1]
}
\date{}
\begin{document}
\maketitle
\begin{abstract}
This paper furthers existing evidence that quantum computers are capable of computations beyond classical computers.
Specifically, we strengthen the collapse of the polynomial hierarchy to \textit{the second level} if: (i) Quantum computers with postselection are as powerful as classical computers with postselection ($\mathsf{PostBQP=PostBPP}$), 
(ii) any one of several quantum sampling experiments ($\mathsf{BosonSampling}$, $\mathsf{IQP}$, $\mathsf{DQC1}$) can be approximately performed by a classical computer (contingent on existing assumptions).
This last result implies that if any of these experiment's hardness conjectures hold, then quantum computers can implement functions classical computers cannot ($\mathsf{FBQP\neq FBPP}$) unless the polynomial hierarchy collapses to its 2nd level. These results are an improvement over previous work which either achieved a collapse to the third level or were concerned with exact sampling, a physically impractical case.

The workhorse of these results is a new technical complexity-theoretic result which we believe could have value beyond quantum computation. In particular, we prove that if there exists an equivalence between problems solvable with an exact counting oracle and problems solvable with an approximate counting oracle, then the polynomial hierarchy collapses to its second level, indeed to $\mathsf{ZPP^{NP}}$.
\end{abstract}

\maketitle
\section{Introduction}
Quantum computers are thought to be poised to outperform classical computers on a number of significant tasks, such as integer factorisation or simulation of quantum systems. Despite this widely held belief, we have no formal proof that factoring is not solvable in polynomial time on a classical computer, or even that $\mathsf{BQP\neq BPP}$.

This is perhaps not surprising given the challenge posed by proving unconditional results in complexity theory; famously $\mathsf{P\stackrel{?}{=}NP}$ remains unresolved after decades of study. A more realistic approach is to base the hardness of quantum computing on widely accepted complexity-theoretic conjectures.

This approach has so far been quite successful for sampling problems, with a number of papers \cite{aaronson2011computational, bremner2016average} showing that various quantum sampling experiments would be hard for a classical algorithm unless the polynomial hierarchy collapses to its third level, or else a widely believed conjecture fails. These results are especially interesting as they naturally imply that quantum computers could implement functions classical computers could not ($\mathsf{FBQP \neq FBPP}$) due to a surprising equivalence between sampling and functional classes \cite{aaronson2014equivalence}.

In a similar vein, there has been a body of work showing that various changes to quantum computing make quantum computing vastly more powerful than similarly modified classical computing. A well-known example is $\mathsf{PostBQP}$ vs $\mathsf{PostBPP}$, i.e. quantum/classical computing with postselection. Surprisingly, the former is in some sense equivalent to problems involving exact counting ($\mathsf{PP}$) \cite{aaronson2005postBQP}, while the latter is only equivalent to approximate counting \cite{o2018weakness}. 
This disconnect implies that the classes are likely not equal, indeed if they are this would also imply a collapse of the polynomial hierarchy to the third level \cite{aaronson2005postBQP}.

Both of these research lines provide strong evidence for that quantum computing is more powerful than classical computing, but they rely on a number of conjectures and complexity theory conditions.
We are therefore motivated to attempt to minimise or remove the need for these conjectures/conditions until we are left with compelling evidence that quantum computing is fundamentally stronger than classical computing.

Hope that weaker conditions might be possible was provided by Fujii et al. \cite{fujii2015power}, who showed that if classical computers can \textit{exactly sample} from any one of a number of quantum supremacy experiments then the polynomial hierarchy would collapse to its second level, an improvement over the existing collapse to third level. In fact they push the collapse all the way to $\mathsf{AM}$, in the second level of the hierarchy.
While this result provides insight as to what may be possible, it stops short of our goal as the proof strategy did not extend to additive approximate sampling, which is arguably the realistic case \cite{aaronson2011computational} and is relevant for questions such as $\mathsf{FBQP\stackrel{?}{=}FBPP}$.

In this paper, we provide the following new complexity theoretic result, which we then use to improve the abovementioned results to a 2nd level polynomial hierarchy collapse. In the following theorem, we use $\parallel$ to denote only one round of parallel oracle calls.
\begin{theorem*}[Main]
    If $\mathsf{P^{\#P}}=\mathsf{BPP_{\parallel}^{NP}}$ then $\mathsf{P^{\#P}=ZPP^{NP}}$
\end{theorem*}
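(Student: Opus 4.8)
The plan is to prove the nontrivial inclusion $\mathsf{P^{\#P}} \subseteq \mathsf{ZPP^{NP}}$; the reverse inclusion is automatic, since $\mathsf{ZPP^{NP}} \subseteq \mathsf{BPP^{NP}} \subseteq \mathsf{PH} \subseteq \mathsf{P^{\#P}}$ by Toda's theorem. I would further shrink the target using $\mathsf{ZPP^{NP}} = \mathsf{RP^{NP}} \cap \mathsf{coRP^{NP}}$ together with the fact that $\mathsf{P^{\#P}}$ is closed under complement: it then suffices to establish $\mathsf{P^{\#P}} \subseteq \mathsf{RP^{NP}}$, because applying this to both a language and its complement places the language in $\mathsf{RP^{NP}} \cap \mathsf{coRP^{NP}}$. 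This reduces the whole theorem to producing, for every $\mathsf{P^{\#P}}$ predicate, a one-sided, $\mathsf{NP}$-certifiable acceptance procedure, which is exactly the ``find-and-verify'' shape of computation that $\mathsf{ZPP^{NP}}$ captures, and which explains why $\mathsf{ZPP^{NP}}$ (rather than $\mathsf{P^{NP}}$) is the natural target.

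The engine would be Stockmeyer's approximate counting: a single round of parallel $\mathsf{NP}$ queries combined with pairwise-independent hashing estimates $\#\mathrm{SAT}(\phi)$ to within a $(1+1/\mathrm{poly})$ factor, placing approximate counting in $\mathsf{BPP_{\parallel}^{NP}}$, and crucially its lower-bound side is witnessed by an actual satisfying assignment landing in a target hash bucket, i.e.\ by an honest $\mathsf{NP}$ certificate. The hypothesis is what upgrades this from approximate to exact: since exact counting --- for instance the bit-extraction languages $\{(\phi,i): \text{the } i\text{-th bit of } \#\mathrm{SAT}(\phi) \text{ is } 1\}$ --- lies in $\mathsf{P^{\#P}} = \mathsf{BPP_{\parallel}^{NP}}$, running the resulting bounded-error machine (amplified, exploiting non-adaptivity to fold all repetitions into one parallel round) yields a candidate exact value $v$ for $\#\mathrm{SAT}(\phi)$, correct with high probability and consistent with the approximate window.

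The core of the argument is then the zero-error conversion. I would exploit the non-adaptive ($\parallel$) structure through the census trick behind $\mathsf{P_{\parallel}^{NP}} = \mathsf{P^{NP[\log]}}$: once the number of satisfiable queries in a round is known, the base predicate ``$M$ accepts $(x,r)$'' becomes \emph{positively} $\mathsf{NP}$-certifiable (guess that many witnesses and evaluate the decision circuit), rather than merely $\mathsf{P^{NP}}$. I would feed this certifiable base predicate into a Lautemann/Sipser covering argument --- sample random shifts and accept only when the $\mathsf{NP}$ oracle one-sidedly confirms that the shifts witness a large acceptance set. The role of the \emph{equality} $\mathsf{P^{\#P}} = \mathsf{BPP_{\parallel}^{NP}}$ is to collapse the quantifier levels this argument would otherwise generate: verification steps that naively sit at the second level ($\Sigma_2^p$/$\Pi_2^p$) can be decided inside $\mathsf{BPP_{\parallel}^{NP}}$ and then rendered errorless by finding and $\mathsf{NP}$-verifying their witnesses, so the computation never climbs above a single $\mathsf{NP}$ oracle.

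The hard part, and the step I expect to be the main obstacle, is precisely this conversion of a bounded-error equivalence into a certifiable, one-sided one. Verifying an exact count is $\mathsf{PP}$-hard, so there is no direct $\mathsf{NP}$ certificate for ``$\#\mathrm{SAT}(\phi)=v$'', and every naive route --- relativized Lautemann, sum-check or self-correction of the permanent, or bucket-by-bucket counting --- either reintroduces a $\Pi_2^p$ verification, recovering only the already-known third-level collapse, or retains two-sided error and so never reaches $\mathsf{ZPP^{NP}}$. The delicate point is to use the equality hypothesis and the single parallel round \emph{together}, so that the certification stays simultaneously one-sided and at the $\mathsf{NP}$ level; ensuring that neither the error probability nor the quantifier depth creeps back up while both constraints are enforced at once is where the real work lies.
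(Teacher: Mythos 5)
There is a genuine gap, and you have in fact identified it yourself: your proposal is an outline of strategies (Stockmeyer plus census trick plus Lautemann covering) followed by an accurate admission that none of them closes the loop, because exact counts admit no $\mathsf{NP}$ certificate and the covering argument reintroduces second-level verification. The paper's proof succeeds precisely by \emph{never attempting to certify a count}. The missing ingredient is random self-reducibility: a $\mathsf{\#P}$-complete function (the permanent) is randomly self-reducible, so a worst-case instance can be smeared into queries whose distribution is independent of the input. Combined with the hypothesis, this makes $\mathsf{P^{\#P}}$ randomly reducible to the hypothesized $\mathsf{BPP^{NP}_{\parallel}}$ algorithm, and --- this is where the parallel (non-adaptive) structure is actually used --- the $\mathsf{NP}$ queries issued by that algorithm then inherit the instance-independence property, yielding a random reduction from $\mathsf{P^{\#P}}$ directly to $\mathsf{NP}$. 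A Feigenbaum--Fortnow style protocol then applies: give Arthur, as \emph{nonuniform advice}, the probabilities that each query position lands in $\mathsf{SAT}$ (well-defined exactly because the query distribution does not depend on the input), and run a statistical test that prevents Merlin from denying too many yes-instances. This places $\mathsf{P^{\#P}}$ in $\mathsf{NP/poly \cap coNP/poly}$ --- a nonuniform containment, categorically different from anything your Lautemann/census route produces.

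The second missing idea is how to collapse from that nonuniform containment, since $\mathsf{NP \subseteq NP/poly \cap coNP/poly}$ is \emph{not} known to collapse $\mathsf{PH}$ to the second level by Karp--Lipton style arguments (only $\mathsf{NP \subseteq (NP\cap coNP)/poly}$ is). The paper invokes Arvind--K\"obler: \emph{checkable} sets in $\mathsf{NP/poly \cap coNP/poly}$ are low for $\mathsf{\Sigma_2^P}$, and $\mathsf{PP}$ has checkable complete sets; this is the non-relativizing step and it gives $\mathsf{P^{\#P}} = \mathsf{\Sigma_2^P \cap \Pi_2^P}$. Only after that equality is in hand does a version of your final ``find a witness and $\mathsf{NP}$-verify it'' idea become sound: the $\mathsf{BPP^{NP}}$ machine is used to \emph{produce} a proof acceptable to the $\mathsf{\Sigma_2^P}$ or $\mathsf{\Pi_2^P}$ verifier, which is then checked in $\mathsf{P^{NP}}$, giving zero error. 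So your instinct about the last step is right, but it is unusable without the two structural results (random reduction to $\mathsf{NP}$ via rsr of the permanent, and checkability-based lowness) that your proposal does not contain.
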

By showing that the existing collapse results for classical boson sampling, commuting circuit sampling and 1 clean qubit sampling can all be modified to show $\mathsf{P^{\#P}}=\mathsf{BPP_{\parallel}^{NP}}$ (we are required to add the parallel requirement to $\mathsf{BPP^{NP}_{\parallel}}$ for our collapse), we improve the implied collapse of the polynomial hierarchy from 3rd order to 2nd. Consequently, this strengthens the evidence that $\mathsf{FBQP \neq FBPP}$. 
This theorem can also be made to strengthen the hierarchy collapse implied by assuming $\mathsf{PostBQP=PostBPP}$ from the third to the second level.

From a purely complexity-theoretic perspective, there is an interesting alternative form of this result in terms of exact and approximate counting\footnote{Exact counting can be defined as exactly counting the number of accepting inputs of some poly-time machine, approximate counting is getting within a multiplicative factor of 2 of exact counting (equivilently to within a $1+\frac{1}{poly(n)}$ factor).}.
\begin{center}
If $\mathsf{P^{ExactCount}_{\parallel}}=\mathsf{P^{ApproxCount}_{\parallel}}$ then $\mathsf{P^{\# P }=ZPP^{NP}}$
\end{center}
This naturally gives our main result the following interpretation: \textit{If the problems solved with the aid of exact counting can also be solved with approximate counting, then both are contained in $\mathsf{ZPP^{NP}}$}. 
The generality of this interpretation hints that there may be applications of this theorem to counting beyond the quantum theory considered herein.
It is also interesting to note that the proof contained herein does not relativise as we rely on the checkability of $\mathsf{\# P}$ for a step of our proof.

The paper is structured in 3 sections. Section \ref{sect:Approximate-exact counting theorem} focuses on the proof of our central theorem, section \ref{sect: postBQP v postBPP} proves that the polynomial hierarchy collapses to second order if $\mathsf{PostBQP=PostBPP}$. In the third section, we prove our various sampling results. Finally, we close by discussing interesting future research and open questions.

\section{Main theorem}\label{sect:Approximate-exact counting theorem}
In this section we prove our main theorem:

\begin{theorem}\label{thm: main}
    If $\mathsf{P^{\# P}=BPP_{\parallel}^{NP}}$ then $\mathsf{P^{\# P}=ZPP^{NP}}$.
\end{theorem}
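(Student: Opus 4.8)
The plan is to prove the two inclusions separately. The inclusion $\mathsf{ZPP^{NP}} \subseteq \mathsf{P^{\#P}}$ holds unconditionally: $\mathsf{ZPP^{NP}} \subseteq \mathsf{PH}$, and Toda's theorem gives $\mathsf{PH} \subseteq \mathsf{P^{\#P}}$. All the work is in the reverse inclusion $\mathsf{P^{\#P}} \subseteq \mathsf{ZPP^{NP}}$ under the hypothesis. Since $\mathsf{P^{\#P}} = \mathsf{P^{perm}}$ for the permanent (a $\mathsf{\#P}$-complete function), it suffices to exhibit a single zero-error, NP-oracle procedure that outputs $\operatorname{perm}(M)$ on every input; I would compute it modulo polynomially many small primes and recover the integer by CRT, so I may work over a large prime field $\mathbb{F}_p$.

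\textbf{Candidate generation.} First I would use the hypothesis to produce a candidate value. The bit-graph of the permanent lies in $\mathsf{P^{\#P}}$, so by assumption it lies in $\mathsf{BPP^{NP}_{\parallel}}$; amplifying each output bit and union-bounding over the $\mathrm{poly}(n)$ bits keeps all the NP queries in a single non-adaptive round (no query depends on another), yielding a $\mathsf{BPP^{NP}_{\parallel}}$ algorithm $A$ that prints $\operatorname{perm}(M)$ with error at most $2^{-\mathrm{poly}(n)}$. The heart of the matter is then to upgrade $A$ from bounded error to the \emph{exactly} zero error demanded by $\mathsf{ZPP^{NP}}$, which is where I expect the self-reducibility (checkability) of the permanent to enter. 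A first attempt is the $\mathsf{LFKN}$ sum-check protocol: it reduces the claim ``$\operatorname{perm}(M)=c$'' to an exact poly-time evaluation of the arithmetisation at a random point, and the prover's partial-sum polynomials are themselves $\mathsf{\#P}$ quantities, hence (again by the hypothesis) computable in $\mathsf{BPP^{NP}_{\parallel}}$, so I can generate both $c$ and a full transcript internally.

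\textbf{The main obstacle.} The hard part will be attaining \emph{literally} zero error: a committed output must never be wrong, yet every ingredient above is only correct with high probability. Two facts make this delicate. First, one cannot simply certify the answer with an NP witness, because deciding ``$\operatorname{perm}(M)\ge c$'' for exponentially large $c$ is $\mathsf{PP}$-hard, not in $\mathsf{NP}$. Second, the naive ``$\exists$ accepting transcript'' query fails precisely because it hands the prover all of the verifier's coins in advance, destroying the soundness of sum-check — the round-by-round quantifier order of the interaction is essential, so a static existential certificate exists for \emph{every} claimed value. Thus interactive soundness gives only an exponentially small but strictly positive error, and composing it with the error of $A$ cannot drive the error to exactly zero.

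\textbf{Proposed resolution.} I would therefore replace the probabilistically-sound check by an \emph{exactly}-sound one extracted from the checkability of $\mathsf{\#P}$. Correctness of an assignment of values to the nodes of the permanent's self-reduction is equivalent to a single universally-quantified, locally-poly-time-checkable condition — the self-reduction (cofactor) identity at every node together with the trivial base cases — which is a coNP predicate and hence one exact NP-oracle query. The randomness and the algorithm $A$ would then be used only to \emph{locate} the candidate and the corresponding consistency certificate, never to gamble on the answer: if the exact check passes the output is certainly correct, and otherwise the machine outputs ``?''. Since the candidate is found with high probability, ``?'' occurs with probability below $\tfrac12$, placing the permanent in $\mathsf{FZPP^{NP}}$ and giving $\mathsf{P^{\#P}} \subseteq \mathsf{ZPP^{NP}}$. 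The genuinely delicate technical step — and the place where the non-relativising checkability of $\mathsf{\#P}$ is indispensable — is making this universal consistency condition simultaneously \emph{exact} and \emph{succinctly verifiable}, i.e.\ compressing the exponentially wide self-reduction into a polynomially checkable certificate without reintroducing the soundness error that sank the interactive approach.
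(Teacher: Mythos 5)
Your reduction to the permanent, your candidate-generation step, and your diagnosis of why a static ``$\exists$ accepting sum-check transcript'' query cannot yield zero error are all sound. The gap is in your proposed resolution, and it sits exactly at the step you yourself flag as ``genuinely delicate'': the single exact coNP consistency query does not exist as described. For the universally-quantified cofactor identity to be one NP-oracle call, the universal verifier must be able to evaluate the claimed value at any node of the self-reduction on its own; so the ``assignment of values to the nodes'' must be handed over as a polynomial-size, \emph{oracle-free} object --- in effect a $\mathsf{P/poly}$ circuit computing the permanent on all minors. The hypothesis $\mathsf{P^{\# P}=BPP^{NP}_{\parallel}}$ supplies nothing of the sort: it gives a randomized machine making SAT queries, hence (by Adleman-type arguments) at best a non-uniform circuit with SAT oracle gates. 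Consistency of an oracle circuit is not a coNP predicate --- the gates' ``yes'' answers need existential witnesses while the ``no'' answers need universal verification, so the check inflates to a $\Pi_2$-type condition, guess-and-check becomes $\Sigma_3$-type, and you land back at the third-level collapse that was already known. Since the compression step is left unresolved, and it is in fact the entire difficulty, the proposal does not establish the theorem.

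For comparison, the paper circumvents this with a different pipeline: (i) the permanent is random \emph{self}-reducible, and under the hypothesis a random reduction to a $\mathsf{BPP^{NP}_{\parallel}}$ language composes into a random reduction of $\mathsf{P^{\# P}}$ to $\mathsf{NP}$ --- non-adaptivity is crucial here, since it preserves the instance-hiding (identical-distribution) property of the queries; (ii) a Feigenbaum--Fortnow-style AM-with-advice protocol, where the advice consists of the acceptance probabilities of the identically distributed queries, then places $\mathsf{P^{\# P}}$ in $\mathsf{NP/poly \cap coNP/poly}$; (iii) Arvind--K\"obler lowness of checkable sets (this is where checkability of $\mathsf{\# P}$/$\mathsf{PP}$ actually enters --- as a lowness tool, not as a certificate format) collapses $\mathsf{P^{\# P}}$ to $\mathsf{\Sigma_2^P \cap \Pi_2^P}$; and (iv) only then does the paper run your kind of find-and-verify step: the randomized algorithm locates a $\mathsf{\Sigma_2^P}$ or $\mathsf{\Pi_2^P}$ witness for the answer, whose validity is one exact $\mathsf{P^{NP}}$ check, giving $\mathsf{ZPP^{NP}}$. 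In short, the exact succinct certificate you are missing is the $\Sigma_2/\Pi_2$ witness, and it only exists after the collapse to the second level has been proved by other means.
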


We will assume readers are broadly familiar with classes such as $\mathsf{\# P}$, where classes are undefined we use the definitions given in the complexity zoo \cite{zoo}. We use the notation $\mathsf{A_{\parallel}^{B}}$ for $\mathsf{A}$ with one set of parallel oracle calls to $\mathsf{B}$.

It is useful to first give an informal description of the proof of theorem \ref{thm: main} before we proceed to the formal proof to give the reader a better idea of the purpose of each lemma. Broadly the proof relies on the notion of random-reducibility, where a given instance of a problem can be solved by a polynomial time algorithm with randomly selected instances of some other problem. If a problem can be randomly reduced to itself we say it is random self-reducible. 
The first key realisation for Theorem \ref{thm: main} is that a random self-reducible language that is in $\mathsf{BPP^{NP}_{\parallel}}$ is itself random reducible to $\mathsf{NP}$, because randomly selecting $x$ for a language in $\mathsf{BPP^{NP}_{\parallel}}$ gives rise to random set of non-adaptive $\mathsf{NP}$ calls, which meets the definition of random reducibility to $\mathsf{NP}$.
Next, as $\mathsf{\#P}$ is random self-reducible \cite{feigenbaum1990random}, the antecedent of Theorem \ref{thm: main} implies that it is random reducible to $\mathsf{NP}$.
The second key realisation is to notice that the proof by Feigenbaum and Fortnow that $\mathsf{NP}$ cannot be random self-reducible unless it is also in $\mathsf{coNP/poly}$ extends beyond self reducibility; any language that is random reducible to $\mathsf{NP}$ is in $\mathsf{coNP/poly}$. 
By repeating the argument for $\mathsf{coNP}$ we show $\mathsf{\#P\subseteq NP/poly \cap coNP/poly}$. Arvind and Köbler \cite{ARVIND2002257} showed any checkable language in $\mathsf{NP/poly \cap coNP/poly}$  is low for $\mathsf{\Sigma^P_2}$ which gives us a polynomial hierarchy collapse to $\mathsf{\Sigma_2^P}$. The final collapse to $\mathsf{ZPP^{NP}}$ comes from using the $\mathsf{BPP^{NP}_{\parallel}}$ algorithm to produce proofs verifiable in $\mathsf{P^{NP}}$, thereby achieving zero error.

We begin by defining random reducibility. A problem is randomly self-reducible if any given instance of the problem can be solved probabilistically by a polynomial time algorithm with access to solved random instances of the same problem.  
For our purposes we will work with a similar concept: random-reducibility, which is the same except the random instances may be from a different problem. It should be stated that this is different from a `randomized reduction', which is a randomised Karp reduction. Random reductions are sometimes called 1-locally random reductions to avoid this confusion.

\begin{definition}[Random Reducible]\label{def: random reducible}
    A function $f$ is \textbf{nonadaptively $k(n)$ random-reducible} to a function $g$ if there are polynomial time computable functions $\sigma$ and $\phi$ with the following two properties.
    \begin{enumerate}[label={(\arabic*)}]
        \item For all $n$ and all $x \in \{0,1\}^n$ $f(x)$ can probably be solved by $\phi$ with instance of $g(y)$ for $y$ selected by $\sigma$:
            \[
            \Pr_{r}\big(
            f(x)=\phi(x,r,g(\sigma(1,x,r)),\ldots, g(\sigma(k, x, r)))
            \big)\geq2/3
            \]
        \item For all $n$, all pairs $\{x_1,x_2\}\subset \{0,1\}^n$ and all $i$, if $r$ is chosen uniformly at random then $\sigma(i,x_1,r)$ and $\sigma(i,x_2,r)$ are identically distrubuted.
    \end{enumerate}
    If both conditions hold we say ($\sigma$, $\phi$) is a random-reduction from $f$ to $g$.
\end{definition}

As we are dealing with no other forms of random reducibility we will just say `$f$ is rr to $g$' when $f$ is non-adaptively $k(n)$ random-reducible to $g$ for some polynomial $k$. If a function is rr to itself then we say the function is random self-reducible (rsr). A set is rr to another if its characteristic function is rr. A set of languages, $\mathsf{A}$ is rr to $\mathsf{B}$ if every $L\in \mathsf{A}$ is rr to some $L'\in\mathsf{B}$.

As with probabilistic classes like $\mathsf{BPP}$, we can boost random reducibility an arbitrarily low ($2^{-n}$) failure probability.

\begin{lemma}[\cite{feigenbaum1990random}]\label{lma: boost rr}
    If function $f$ is non-adaptively $k(n)$ random-reducible to $g$ then for all $t(n)$, $f$ is non-adaptively $24t(n)k(n)$ random-reducible to $g$ where condition (1) holds for at least $1-2^{-t(n)}$ of the r's in $\{0,1\}^{24t(n)k(n)}$.
\end{lemma}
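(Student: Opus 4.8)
The plan is to prove this by a standard majority-amplification argument, of the same flavour as the one used to reduce error in $\mathsf{BPP}$, with the one extra twist that we must check that the marginal-distribution condition (2) of Definition \ref{def: random reducible} survives the amplification. Since the statement already fixes the target query count ($24t(n)k(n)$) and the target error ($2^{-t(n)}$), the proof reduces to choosing the number of repetitions, taking a majority, and re-verifying both conditions of the definition for the new reduction. Throughout I take $t(n)$ to be polynomially bounded, so that the amplified $\sigma,\phi$ remain polynomial-time computable.

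First I would fix the given reduction $(\sigma,\phi)$ with query count $k=k(n)$ and success probability at least $2/3$, and set the number of independent repetitions to $m=24t(n)$. I would build a new reduction on randomness $R=(r_1,\dots,r_m)$, where each block $r_j$ plays the role of the original $r$, so that $R$ ranges over $\{0,1\}^{24t(n)k(n)}$ as required. The new query function $\sigma'$ issues the $mk$ nonadaptive queries $\sigma(i,x,r_j)$ for $i\in\{1,\dots,k\}$ and $j\in\{1,\dots,m\}$, and the new post-processor $\phi'$ recomputes the $m$ candidate outputs $\phi(x,r_j,g(\sigma(1,x,r_j)),\dots,g(\sigma(k,x,r_j)))$ and returns the value occurring most often (the plurality).

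To establish condition (1), I would let $X_j$ be the indicator that the $j$-th trial returns $f(x)$; the $X_j$ are independent with $\Pr[X_j=1]\geq 2/3$, so $\mathbb{E}[\sum_j X_j]\geq 2m/3$. Whenever the correct value appears in strictly more than $m/2$ of the trials it is the unique plurality and $\phi'$ answers correctly, so it suffices to bound $\Pr[\sum_j X_j\leq m/2]$. A Hoeffding bound (with deviation $\lambda\geq 2m/3-m/2=m/6$) gives $\Pr[\sum_j X_j\leq m/2]\leq \exp(-2(m/6)^2/m)=\exp(-m/18)$, and with $m=24t$ this is $\exp(-4t/3)\leq 2^{-t}$, which is exactly the claimed confidence and pins down why the constant $24$ is comfortable. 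The number of oracle calls is $mk=24t(n)k(n)$, matching the statement.

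The part that is genuinely specific to random reducibility, rather than to plain $\mathsf{BPP}$ amplification, is condition (2), and this is where I would be most careful — though I do not expect a real obstacle. For each fixed new index I would argue that the query $\sigma(i,x,r_j)$, viewed as a function of the uniform block $r_j$, has the same distribution for $x=x_1$ and $x=x_2$ by condition (2) of the original reduction; since the blocks $r_1,\dots,r_m$ are independent and uniform, this identity of marginals is inherited query-by-query by the full reduction. Hence $\sigma'(i',x_1,R)$ and $\sigma'(i',x_2,R)$ are identically distributed for every new index $i'$. The only thing to watch is that condition (2) is a \emph{per-query marginal} statement, not a joint statement, so it transfers immediately under independent repetition without our ever needing to control the joint distribution of the $mk$ queries.
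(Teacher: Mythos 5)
Your proof is correct and matches the approach of the source: the paper does not prove this lemma itself but imports it from Feigenbaum--Fortnow \cite{feigenbaum1990random}, whose argument is exactly this independent-repetition-plus-majority amplification (a Hoeffding/Chernoff bound for condition (1), and per-query inheritance of the marginal condition (2) under independent blocks of randomness). Your observation that the vote needs a \emph{unique} correct answer is precisely the point of the paper's remark immediately following the lemma (boosting works for functions such as $\mathsf{Perm}$ but can fail for relations); the only cosmetic mismatch is that your total random string has length $24t(n)\cdot\lvert r\rvert$ rather than literally $24t(n)k(n)$, an abuse of notation already present in the lemma statement itself, since Definition \ref{def: random reducible} never fixes the length of $r$.
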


\begin{remark*}
    In general, this boosting may not work for a definition of random reducibility dealing with relations instead of functions as they may have multiple correct outputs. However, counting problems like $\mathsf{Perm}$, which has only one correct output for a given input, can be boosted.
\end{remark*}

\begin{lemma}[\cite{feigenbaum1990random}]\label{lma: perm is rsr}
    Any $\mathsf{\#P}$-complete language is rsr.
\end{lemma}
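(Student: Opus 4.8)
The plan is to reduce everything to the single canonical complete problem, the permanent $\mathsf{Perm}$, prove that it is rsr by Lipton's interpolation trick, and then transport this property to an arbitrary $\mathsf{\#P}$-complete function through the completeness reductions. First I would establish that $\mathsf{Perm}$ is rsr. Fix a prime $p$ of polynomially many bits that exceeds both $n+1$ and the largest value any relevant permanent can take, so that the integer answer is recovered from its residue (if one prime is too small, replace it by several poly-size primes and combine by CRT). Given a matrix $A\in\mathbb{Z}_p^{n\times n}$, draw $R$ uniformly from $\mathbb{Z}_p^{n\times n}$, fix distinct nonzero points $t_1,\dots,t_{n+1}\in\mathbb{Z}_p$, and set $\sigma(i,A,R)=A+t_iR$. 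Since $t\mapsto\mathrm{perm}(A+tR)$ is a univariate polynomial of degree at most $n$, its exact values at the $n+1$ points $t_i$ determine $\mathrm{perm}(A)=\mathrm{perm}(A+0\cdot R)$ by Lagrange interpolation, which is the polynomial-time combiner $\phi$.

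Next I would verify the two clauses of Definition \ref{def: random reducible}. For clause (1), because the oracle returns exact permanents, the interpolation is always correct, so the success probability is $1$, comfortably above $2/3$. For clause (2), fix any nonzero $t_i$; the map $R\mapsto A+t_iR$ is a bijection of $\mathbb{Z}_p^{n\times n}$, so $A+t_iR$ is uniformly distributed regardless of $A$. Hence for any two inputs $A_1,A_2$ the queries $A_1+t_iR$ and $A_2+t_iR$ are both uniform and therefore identically distributed. Thus $\mathsf{Perm}$ is rsr.

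Finally I would transport rsr to an arbitrary $\mathsf{\#P}$-complete $f$. By Valiant's theorem $\mathsf{Perm}$ is $\mathsf{\#P}$-complete under parsimonious (counting) reductions, so there are polynomial-time maps giving $f(x)=\beta(x,\mathrm{perm}(A(x)))$ and $\mathrm{perm}(B)=\alpha(B,f(\psi(B)))$. Composing these, on input $x$ I form $A(x)$, run the $\mathsf{Perm}$ reduction to obtain the uniform queries $B_i=A(x)+t_iR$, and then issue the actual non-adaptive $f$-queries $y_i=\psi(B_i)$; the combiner recovers each $\mathrm{perm}(B_i)=\alpha(B_i,f(y_i))$, interpolates $\mathrm{perm}(A(x))$, and outputs $\beta(x,\mathrm{perm}(A(x)))$. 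Clause (2) survives this composition, which is the crucial point: all dependence on $x$ lives inside $A(x)$, but after the shift each $B_i$ is uniform independently of $x$, so the distribution of $y_i=\psi(B_i)$ depends only on $n$.

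I expect the main obstacle to be exactly this preservation of the oblivious-distribution clause (2) along the chain of reductions, together with the bookkeeping needed to recover integer counts from finite-field permanents: one must check that the prime (or set of primes) is large enough, that lifting each $B_i$ to an integer matrix and applying the completeness reduction back to $f$ keeps the counts of polynomial bit-length, and that the completeness reductions can be taken parsimonious and non-adaptive so that the composed reduction still fits Definition \ref{def: random reducible}.
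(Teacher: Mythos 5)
Your proposal is correct and is essentially the argument behind the paper's citation: the paper gives no proof of this lemma at all, importing it from Feigenbaum--Fortnow, whose justification is precisely what you reconstruct --- Lipton's random-shift-and-interpolate self-reduction of the permanent over a prime field of size exceeding $n+1$ (with CRT to recover integer counts), transported to an arbitrary complete function through the completeness reductions in both directions. The caveat you flag at the end is the genuinely load-bearing one and fixes how the lemma must be read: ``$\mathsf{\#P}$-complete'' here means complete under non-adaptive, (weakly) parsimonious reductions whose instance maps can be made length-regular (the permanent's paddability handles this), since an adaptive Turing reduction from the permanent to $f$ would destroy the non-adaptiveness required by Definition~\ref{def: random reducible}; this is the reading in the cited source, and it suffices for the paper, which only ever invokes the lemma for the permanent itself plus one-query postprocessing.
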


This concludes our definitions, we can now state an intermediate theorem to our final result.

\begin{theorem}\label{thm: BPP is rr to NP}
    If $\mathsf{P^{\# P}= BPP_{\parallel}^{NP}}$ then $\mathsf{P^{\# P}}$ is random reducible to $\mathsf{NP}$.
\end{theorem}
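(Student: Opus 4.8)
The plan is to prove Theorem~\ref{thm: BPP is rr to NP} by combining the random self-reducibility of $\mathsf{\#P}$-complete problems (Lemma~\ref{lma: perm is rsr}) with the hypothesis $\mathsf{P^{\#P}} = \mathsf{BPP^{NP}_{\parallel}}$, and then showing that the resulting randomized, non-adaptive oracle access to $\mathsf{NP}$ can be folded into a single random-reduction to $\mathsf{NP}$. Fix a $\mathsf{\#P}$-complete language $L$ (e.g.\ the language naturally associated with $\mathsf{Perm}$); since every language in $\mathsf{P^{\#P}}$ reduces to $L$, it suffices to produce a random-reduction from $L$ to $\mathsf{NP}$, after which the general statement for all of $\mathsf{P^{\#P}}$ follows by composing with an ordinary polynomial-time many-one reduction.

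First I would invoke Lemma~\ref{lma: perm is rsr} to get a random self-reduction $(\sigma_0, \phi_0)$ from $L$ to itself: on input $x$, the reducer produces random instances $y_1 = \sigma_0(1,x,s), \ldots, y_m = \sigma_0(m,x,s)$, each marginally uniformly distributed, such that $\phi_0(x, s, L(y_1), \ldots, L(y_m))$ recovers $L(x)$ with high probability. By the hypothesis, $L \in \mathsf{P^{\#P}} = \mathsf{BPP^{NP}_{\parallel}}$, so each value $L(y_j)$ can be computed by a $\mathsf{BPP^{NP}_{\parallel}}$ machine $M$: using fresh randomness, $M$ makes one round of parallel (non-adaptive) queries $q_{j,1}, \ldots, q_{j,p}$ to an $\mathsf{NP}$ oracle and then post-processes the answers in polynomial time to output $L(y_j)$ correctly with probability at least $2/3$ (boostable to $1 - 2^{-n}$). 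The key structural observation, flagged in the informal overview, is that substituting these $\mathsf{BPP^{NP}_{\parallel}}$ computations in for the oracle calls yields exactly the template of Definition~\ref{def: random reducible}: the composite reducer selects $y_j$ randomly and then, for each $y_j$, selects $\mathsf{NP}$-queries $q_{j,\ell}$ randomly, the whole collection $\{q_{j,\ell}\}$ being a single batch of non-adaptive queries; the post-processor $\phi$ first runs each copy of $M$'s classical post-processing to reconstruct the $L(y_j)$, then runs $\phi_0$ to reconstruct $L(x)$.

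The step I expect to be the main obstacle is verifying condition~(2) of Definition~\ref{def: random reducible} --- that for every fixed query index, the distribution of the $\mathsf{NP}$-query fed to $g$ is independent of the input $x_1$ versus $x_2$. The marginal uniformity of each $y_j$ from the random self-reduction handles the ``outer'' layer, but I must check that the ``inner'' layer --- the map from a uniformly distributed $y_j$ together with $M$'s internal randomness to the query $q_{j,\ell}$ --- also produces a distribution that does not depend on $x$. Because $M$'s query generation depends only on $y_j$ (and its own independent randomness), and $y_j$ is identically distributed for $x_1$ and $x_2$, the induced distribution on each $q_{j,\ell}$ should indeed be input-independent; making this composition precise, and keeping the success probabilities and query counts polynomially bounded through the composition (appealing to the boosting of Lemma~\ref{lma: boost rr} so that the union bound over all the $L(y_j)$ reconstructions still leaves the overall error below $1/3$), is the technical heart of the argument. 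Finally, to pass from $L$ to all of $\mathsf{P^{\#P}}$, I would note that any language in $\mathsf{P^{\#P}}$ reduces in deterministic polynomial time to $L$, and a deterministic reduction composed with a random-reduction is again a random-reduction, completing the proof.
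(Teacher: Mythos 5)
The core of your argument---substituting a $\mathsf{BPP^{NP}_{\parallel}}$ computation for each oracle call of the random self-reduction, and verifying condition~(2) of Definition~\ref{def: random reducible} by noting that the inner queries depend on the input only through the identically distributed $y_j$ (plus fresh, input-independent randomness)---is sound, and it is essentially the paper's own route (Lemmas~\ref{lma:BPP rr similarity} and~\ref{lma: rr to BPP NP parallel is rr to NP}). The problem lies entirely in your opening step, which is where the paper does something you have omitted.

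You claim that since $L$ is $\mathsf{\#P}$-complete, ``every language in $\mathsf{P^{\#P}}$ reduces to $L$,'' and that the theorem then follows by composing a polynomial-time many-one reduction with the random reduction from $L$ to $\mathsf{NP}$. If ``reduces'' means many-one reduction, this claim is unjustified and, as far as is known, false unconditionally: $L$ is decidable with a \emph{single} $\mathsf{\#P}$ oracle query, so many-one closure of $\mathsf{P^{\#P}}$ into $L$ would give $\mathsf{P^{\#P}}=\mathsf{P^{\#P[1]}}$ outright, i.e.\ that polynomially many \emph{adaptive} $\mathsf{\#P}$ queries can be compressed into one---an open problem. ($\mathsf{\#P}$-completeness of $L$ gives hardness for the function class $\mathsf{\#P}$, not many-one hardness for the oracle class $\mathsf{P^{\#P}}$.) If instead ``reduces'' means a Turing reduction, the claim is trivially true but the composition fails: later queries of an adaptive reduction depend on earlier oracle answers, so the composed query-generator cannot have the one-round form $\sigma(i,x,r)$ required by Definition~\ref{def: random reducible}, and condition~(2) cannot even be formulated for it. The paper closes exactly this hole \emph{using the hypothesis}: since $\mathsf{P^{\#P}}=\mathsf{BPP^{NP}_{\parallel}}\subseteq\mathsf{PH}$ and, by Toda's theorem, $\mathsf{PH}\subseteq\mathsf{P^{\#P[1]}}$, the assumption yields $\mathsf{P^{\#P}}=\mathsf{P^{\#P[1]}}$; a single, deterministically computed $\mathsf{\#P}$ query can then be answered by the random self-reduction of a $\mathsf{\#P}$-complete problem, with the machine's pre- and post-processing folded into $\phi$ (Lemma~\ref{lma: P to 1 sharp P is rr to 1 sharp P}). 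Your outline uses the hypothesis only to place $L$ in $\mathsf{BPP^{NP}_{\parallel}}$; it must also be used, via Toda, to collapse $\mathsf{P^{\#P}}$ to one oracle query before your composition machinery can take over.
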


Before we prove this, we will provide a number of smaller results. In the following lemma the notation $\mathsf{A^{(B[1])}}$ means $\mathsf{A}$ with one oracle call to $\mathsf{B}$.

\begin{lemma}\label{lma: P to 1 sharp P is rr to 1 sharp P}
    Any language in $\mathsf{P^{\# P[1]}}$ is random reducible to $\mathsf{\# P}$.
\end{lemma}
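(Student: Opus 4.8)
The plan is to unwind the definition of $\mathsf{P^{\#P[1]}}$ and then \emph{hide} the single oracle query using random self-reducibility. If $L\in\mathsf{P^{\#P[1]}}$, then there is a polynomial-time machine that on input $x$ computes a query string $q(x)$, makes a single call to a $\mathsf{\#P}$ function $h$, receives the value $h(q(x))$, and decides membership via a polynomial-time post-processing function. The naive attempt is to use this single call as the random reduction itself, i.e.\ to set $\sigma(1,x,r)=q(x)$. This trivially satisfies condition (1) of Definition \ref{def: random reducible}, but it fails condition (2): since $q$ is deterministic, $\sigma(1,x_1,r)$ and $\sigma(1,x_2,r)$ are point masses on the generally distinct strings $q(x_1)$ and $q(x_2)$, so they are not identically distributed. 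The entire content of the lemma is to remove this dependence of the query distribution on $x$.

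To do this I would route the call through a fixed random self-reducible $\mathsf{\#P}$-complete function $g$ (for concreteness, the permanent). By $\mathsf{\#P}$-completeness, the single query to $h$ can be simulated by a single query to $g$: there are polynomial-time functions $\tau$ and a post-processing step so that $h(q(x))$, and hence the bit $L(x)$, can be written as $\phi_0\bigl(x,g(\tau(x))\bigr)$ for a polynomial-time $\phi_0$, still using only one evaluation of $g$ (the permanent is complete under such single-query metric reductions, with a known multiplicative factor coming from the parsimonious reduction to $\#3\mathrm{SAT}$). By Lemma \ref{lma: perm is rsr}, $g$ is random self-reducible, so there are polynomial-time $\sigma_g,\phi_g$ with $\sigma_g(i,w,r)$ identically distributed across all inputs $w$ of a given length, and with $\phi_g$ reconstructing $g(w)$ from the values $g(\sigma_g(i,w,r))$ with probability at least $2/3$.

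The random reduction for $L$ is then obtained by composition: set $\sigma(i,x,r)=\sigma_g\bigl(i,\tau(x),r\bigr)$, and let $\phi$ first run $\phi_g$ on the received values to reconstruct $g(\tau(x))$ and then output $\phi_0\bigl(x,g(\tau(x))\bigr)$. Condition (1) holds because $\phi_g$ returns the correct value of $g(\tau(x))$ with probability at least $2/3$ (boosted to $1-2^{-n}$ by Lemma \ref{lma: boost rr} if desired), after which the remaining steps are exact polynomial-time computations. Condition (2) is precisely where the detour pays off: for any $x_1,x_2\in\{0,1\}^n$ the queries $\sigma(i,x_1,r)=\sigma_g(i,\tau(x_1),r)$ and $\sigma(i,x_2,r)=\sigma_g(i,\tau(x_2),r)$ are identically distributed, since the self-reduction $\sigma_g$ hides its input. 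Hence $(\sigma,\phi)$ is a random reduction from (the characteristic function of) $L$ to a $\mathsf{\#P}$ function.

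The main obstacle, and really the only idea in the proof, is establishing condition (2): the naive reduction obeys (1) but exposes $x$ through its query, and the fix is exactly to pass through an rsr-complete function so that a single deterministic query is smeared into queries whose marginal distribution is input-independent. The remaining work is bookkeeping that I must nonetheless verify. First, because the hiding guarantee of $\sigma_g$ is stated only for inputs of equal length, I must ensure $\tau(x)$ has a fixed length $m(n)$ for all $x\in\{0,1\}^n$; this is achieved by padding the permanent instance without changing its value (e.g.\ embedding a smaller matrix block-diagonally with an identity block). Second, I must confirm that the $\mathsf{\#P}$-completeness reduction genuinely preserves the single-query structure and that the finite-field arithmetic underlying the self-reduction of $g$ (working modulo a prime exceeding the largest possible count, which has $O(\mathrm{poly}(n))$ bits) recovers the exact integer value of $g(\tau(x))$. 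Both are routine.
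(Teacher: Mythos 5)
Your proposal is correct and takes essentially the same route as the paper's (much terser) proof: answer the single $\mathsf{\#P}$ oracle call via the random self-reduction of a $\mathsf{\#P}$-complete function, and fold the remaining deterministic computation into $\phi$. The details you supply---the single-query completeness reduction $\tau$, the padding to equalize query lengths, and the explicit check of condition (2)---are exactly the bookkeeping the paper's one-line argument leaves implicit.
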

\begin{proof}
    As $\mathsf{\# P}$-complete languages are random self-reducible and only one oracle call is needed we can use a random reduction of $\mathsf{\# P}$ to give the answer to the oracle call and use the polynomial time $\phi$ algorithm to perform the rest of the $\mathsf{P}$ algorithm.
\end{proof}

The next lemma captures the intuition that $\mathsf{BPP^{NP}_{\parallel}}$ algorithms are `almost' rr to $\mathsf{NP}$, only missing the random element selection part of the definition (property (2)).

\begin{lemma}\label{lma:BPP rr similarity}
    For all $L$ in $\mathsf{BPP^{NP}_{\parallel}}$ there exists polynomial time computable functions $\sigma_{B}$, $\phi_{B}$ such that
    \[
    \Pr_{r}\left(
            L(x)=\phi_{B}(x,r,\mathsf{SAT}(\sigma_{B}(1,x,r)),\ldots, \mathsf{SAT}(\sigma_{B}(m, x, r)))
            \right)\geq 1-2^{-n}.
    \]
\end{lemma}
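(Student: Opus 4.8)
The plan is to observe that a $\mathsf{BPP^{NP}_{\parallel}}$ computation already has exactly the shape demanded by condition (1) of Definition \ref{def: random reducible}, once its success probability is amplified. Fix $L \in \mathsf{BPP^{NP}_{\parallel}}$ and let $M$ be a probabilistic polynomial-time machine that, on input $x$ and random string $r$, first computes a list of formulas non-adaptively (that is, before seeing any oracle answer), queries all of them to $\mathsf{SAT}$ in a single parallel round, and then decides $L(x)$ with error at most $1/3$.

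First I would amplify in the standard $\mathsf{BPP}$ fashion: run $M$ independently $t = O(n)$ times on disjoint blocks of the random string $r$ and take the majority of the $t$ decisions, which by a Chernoff bound is correct with probability at least $1 - 2^{-n}$. The crucial point is that amplification preserves the single-round parallel structure. Because each of the $t$ runs is non-adaptive, the formulas it queries depend only on $x$ and its own block of randomness and never on another run's oracle answers; hence the union of all queries across the $t$ runs can still be issued in one parallel round, and the total query count stays polynomial.

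Next I would package this into the two required functions. Let $\sigma_B(i, x, r)$ output the $i$-th formula in the combined query list produced by the amplified machine, and let $\phi_B(x, r, a_1, \ldots, a_m)$ reconstruct from the answer bits $a_i = \mathsf{SAT}(\sigma_B(i,x,r))$ the decision of each of the $t$ runs and return their majority. Both are polynomial-time computable: $\sigma_B$ merely simulates the query-generation phase (deterministic once $r$ is fixed), and $\phi_B$ computes a majority over $t$ poly-time-decidable outcomes. This yields precisely the claimed inequality, with $m$ the total number of queries.

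The only point needing care is that $m$ must be a fixed polynomial independent of $r$, so that $\sigma_B$ and $\phi_B$ have a well-defined signature; I would enforce this by padding every run to emit the same polynomial number of formulas (for instance by appending queries to a fixed trivially unsatisfiable formula and discarding their answers). With that convention the construction is immediate, so the real content of the lemma is just the observation that $\mathsf{BPP}$-style amplification does not destroy non-adaptivity. I would emphasize that this argument makes no attempt to secure condition (2) of Definition \ref{def: random reducible} — the identical-distribution property of the queries — which is exactly what separates this ``almost rr'' statement from genuine random reducibility and which the proof of Theorem \ref{thm: BPP is rr to NP} must supply by other means.
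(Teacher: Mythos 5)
Your proposal is correct and takes essentially the same route as the paper: define $\sigma_B$ to simulate the $\mathsf{BPP^{NP}_{\parallel}}$ machine up to its single parallel query round and output the $i$-th query, and define $\phi_B$ to complete the simulation using the supplied $\mathsf{SAT}$ answers, with both functions reading the same random string so they agree on the query list. You are in fact slightly more careful than the paper, which asserts the $1-2^{-n}$ bound but leaves implicit both the majority-vote amplification and the observation that it preserves one-round non-adaptivity.
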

\begin{proof}
    If $L\in \mathsf{BPP^{NP}_{\parallel}}$ then there exists a polynomial time algorithm, $A$, to decide $x \in L$ which calls only one set of up to $m$ non-adaptive $\mathsf{NP}$ queries. We assume the $\mathsf{NP}$ calls $\mathsf{SAT}$ for simplicity. Let the algorithm for $\sigma_{B}(i,x,r)$ run $A$ until the step involving oracle queries and then output just the $i$'th query (assuming some arbitrary query ordering).
    The algorithm for $\phi_{B}$ is now just the algorithm $A$ using the oracle calls produced asked by $\sigma_{B}(i,x,r)$ in place of directly making its own calls. Since both $\phi_{B}$ and $\mathsf{\sigma_{B}}$ have access to the same random string $r$ they will both select the same oracle calls.
\end{proof}
Lemma \ref{lma:BPP rr similarity} is useful as it proves that $\mathsf{BPP^{NP}_{\parallel}}$ fulfils property (1) of the definition of random reducibility to $\mathsf{NP}$, but does not prove the random distribution of ${\sigma_B(1,x,r)}$. The next theorem shows that if a langauge already randomly reduces to $\mathsf{BPP^{NP}_{\parallel}}$ then this random reduction condition (condition (2)) is also fufilled, and thus the language is random reducible to $\mathsf{NP}$

\begin{lemma}\label{lma: rr to BPP NP parallel is rr to NP}
    All languages random reducible to a language in $\mathsf{BPP^{NP}_{\parallel}}$ are random reducible to $\mathsf{NP}$.
\end{lemma}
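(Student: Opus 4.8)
The plan is to compose the given random reduction from $f$ to $L$ with the near-reduction of $L$ to $\mathsf{SAT}$ supplied by Lemma \ref{lma:BPP rr similarity}, and then to verify that the composition still satisfies both clauses of Definition \ref{def: random reducible}. First I would fix a language $f$ that is rr to some $L \in \mathsf{BPP^{NP}_{\parallel}}$, and invoke Lemma \ref{lma: boost rr} to obtain an outer reduction $(\sigma_1,\phi_1)$ from $f$ to $L$ making $k=k(n)$ nonadaptive queries $y_j=\sigma_1(j,x,r)$, with clause (1) holding for all but a $2^{-n}$ fraction of $r$. Separately, Lemma \ref{lma:BPP rr similarity} gives polynomial-time $(\sigma_B,\phi_B)$ such that, for any instance $y$, running the $\mathsf{BPP^{NP}_{\parallel}}$ machine on $y$ with an independent random string reproduces $L(y)$ from the answers $\mathsf{SAT}(\sigma_B(1,y,\cdot)),\dots,\mathsf{SAT}(\sigma_B(m,y,\cdot))$ except with probability $2^{-|y|}$.

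Next I would define the composed reduction $(\sigma,\phi)$ from $f$ to $\mathsf{NP}$ using a combined random string $R=(r,r'_1,\dots,r'_k)$, where $r$ drives $\sigma_1$ and each $r'_j$ is a fresh, independent string used to simulate $L$ on the $j$-th outer query. Indexing the new queries by pairs $\langle i,j\rangle$ with $1\le i\le m$ and $1\le j\le k$, I set $\sigma(\langle i,j\rangle,x,R)=\sigma_B(i,\sigma_1(j,x,r),r'_j)$, a total of $mk$ (hence polynomially many) $\mathsf{SAT}$ queries. The combining function $\phi$ recomputes each $y_j=\sigma_1(j,x,r)$, forms $\widehat{L}(y_j)=\phi_B(y_j,r'_j,\cdot)$ from the $j$-th block of $\mathsf{SAT}$-answers, and returns $\phi_1(x,r,\widehat{L}(y_1),\dots,\widehat{L}(y_k))$; each of these steps is clearly polynomial time.

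Clause (1) should follow from a union bound: the outer reduction errs with probability at most $2^{-n}$, and conditioned on any fixed $r$ each inner simulation errs with probability at most $2^{-|y_j|}$, so (after padding or mild amplification of the inner simulation so that $k\,2^{-\min_j|y_j|}$ is small) with probability at least $2/3$ every $\widehat{L}(y_j)$ equals $L(y_j)$, whence $\phi_1$ outputs $f(x)$ correctly. The step I expect to be the crux is clause (2), the identical-distribution condition. Here I would fix an index $\langle i,j\rangle$ and two inputs $x_1,x_2$ of the same length, and argue that $\sigma(\langle i,j\rangle,x_b,R)=\sigma_B(i,\sigma_1(j,x_b,r),r'_j)$ has the same distribution for $b=1,2$. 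The key point is that $\sigma_B$ is a fixed deterministic function applied to the pair $(\sigma_1(j,x_b,r),r'_j)$: by clause (2) of the outer reduction, $\sigma_1(j,x_1,r)$ and $\sigma_1(j,x_2,r)$ are identically distributed, and $r'_j$ is independent of $r$ with a distribution not depending on $x_b$, so the two pairs are identically distributed and hence so are their images under $\sigma_B$.

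The subtlety I would flag here is that the length of the internal randomness $r'_j$ may depend on $|y_j|$, which is itself random; I would handle this by letting $r'_j$ be a sufficiently long uniform string from which $\sigma_B$ reads a prefix of the appropriate length, so that the joint law of $(y_j,r'_j)$ remains a product of the common marginal of $y_j$ with an independent uniform string, leaving the distributional argument intact. With both clauses verified and $mk$ polynomial, $(\sigma,\phi)$ is a random reduction from $f$ to $\mathsf{NP}$, as required.
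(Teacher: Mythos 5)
Your proposal is correct and follows essentially the same route as the paper's own proof: compose the outer random reduction with the near-reduction of Lemma \ref{lma:BPP rr similarity} using fresh independent inner random strings, take a union bound for clause (1), and observe for clause (2) that the composed query generator $\sigma_B(i,\sigma_1(j,x,r),r'_j)$ depends on $x$ only through the identically distributed outer queries. In fact you are slightly more careful than the paper on two points it glosses over --- the inner error being $2^{-|y_j|}$ rather than $2^{-n}$, and the length of the inner randomness depending on the random query length --- both of which you patch correctly.
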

\begin{proof}Let $L$ be a language which is random reducible to $L_B\in \mathsf{BPP^{NP}_{\parallel}}$. We will demonstrate that $L$ is rr to $\mathsf{NP}$ by writing out the definition of random reducibility to ${L_B}$ then using lemma \ref{lma:BPP rr similarity} to substitute the calls to ${L_B}$ with a formula using calls to $\mathsf{SAT}$. We can then rearange this into a format which directly makes calls to $\mathsf{SAT}$ and we show that these calls inherit the randomness property from the random reduction of $L$ to $L_B$.

    Let us assume the random reduction of $L$ to $L_B$ on input $x$ uses $k$ calls, from the definition of random reducibility there exists $\phi$ and $\sigma$ such that
    \[
\Pr_{r}\left(
            L(x)=\phi(x,r,L_{\mathsf{BPP}}(\sigma(1,x,r)),\ldots, L_{{B}}(\sigma(k, x, r)))
            \right)\geq 1-2^{-n}.
\]

Define $y_i:=\sigma(i, x, r)$. 

\[
\Pr_{r}\left(
            L(x)=\phi(x,r,L_{\mathsf{B}}(y_1),\ldots, L_{\mathsf{BPP}}(y_k))
            \right)\geq 1-2^{-n}
\]

Using lemma \ref{lma:BPP rr similarity} we can substitute $L_B$ with $\phi_B$, $\sigma_B$ and a random string $r_i$ for the i'th call to $L_B$. In the following we assume each lemma-\ref{lma:BPP rr similarity}-reduction uses $m$ $\mathsf{SAT}$ calls.

\begin{align*}
\Pr_{r, r_1, \ldots r_k}\big(
            L(x)=\phi(x,r,&\phi_{B}(y_1,r_1,\mathsf{SAT}(\sigma_{B}(1,y_1,r_1)),
            \ldots, \mathsf{SAT}(\sigma_{B}(m, y_1, r_1))),\\ \ldots, &\phi_{B}(y_k,r_k,\mathsf{SAT}(\sigma_{B}(1,y_k,r_k)), \ldots, \mathsf{SAT}(\sigma_{B}(m,y_k,r_k)))
            \big)
            \geq 1-(k+1)2^{-n}.
\end{align*}

The failure probability $1-(k+1)2^{-n}$ comes the failure probability of the $k$ reductions combined with the failure probability from the random reduction.

We can now begin to combine elements of the reduction from $L$ to $L_B$ with reduction from $L_B$ to $\mathsf{SAT}$. We start with $\phi$ which we combine with $\phi_B$ to define $\phi'$, informally we can think of this as doing the two polynomial-time algorithms as a larger polynomial time algorithm.
\begin{multline*}
\phi'(x,r,r_1,\ldots,r_k, \mathsf{SAT(\sigma_B(1,y_1,r_1})),\ldots \mathsf{SAT(\sigma_B(m,y_k,r_k}))):=
\\
\phi(x,r,\phi_{B}(y_1,r_1,\mathsf{SAT}(\sigma_{B}(1,y_1,r_1)),
\ldots, \mathsf{SAT}(\sigma_{B}(m, y_1, r_1))),\ldots,\phi_{B}(y_k,r_k,\mathsf{SAT}(\sigma_{B}(1,y_k,r_k)), \ldots, \mathsf{SAT}(\sigma_{B}(m,y_k,r_k))
\end{multline*}

We use `$x\#y$' to denote $y$ appended to $x$. Define $\mathbf{r}:=r\#r_1\#\ldots\#r_k$.

Again we define a new function, $\sigma'$, as the combination of two existing function, $\sigma$ and $\sigma_B$:

\[
\sigma'(i,j,x,\mathbf{r}):= \sigma_B(i,\sigma(j,x,r),r_j).
\]

Which gives the following simplified equation:
\begin{equation}\label{eq: finally}
\Pr_{\mathbf{r}}\big(
            L(x)=
\phi'(x,\mathbf{r}, \mathsf{SAT}(\sigma'(1,1,x,\mathbf{r})),\ldots \mathsf{SAT}(\sigma'(m,k,x,\mathbf{r}))
            \big)
            \geq 1-(k+1)2^{-n}.
\end{equation}

We will now directly check the definition of $\sigma'$, $\phi'$ and the equation \ref{eq: finally} against the definition of random-reducbile, recall the two conditions definition \ref{def: random reducible}:
\begin{enumerate}[label={(\arabic*)}]
        \item For all $n$ and all $x \in \{0,1\}^n$ $f(x)$ can probably be solved by $\phi$ with instance of $g(y)$ for $y$ selected by $\sigma$:
            \[
            \Pr_{r}\big(
            f(x)=\phi(x,r,g(\sigma(1,x,r)),\ldots, g(\sigma(k, x, r)))
            \big)\geq2/3
            \]
        \item For all $n$, all pairs $\{x_1,x_2\}\subset \{0,1\}^n$ and all $i$, if $r$ is chosen uniformly at random then $\sigma(i,x_1,r)$ and $\sigma(i,x_2,r)$ are identically distrubuted.
    \end{enumerate}

We can see that these two conditions are met: 

\begin{enumerate}[label={(\arabic*)}]
        \item For sufficiently large $n$, $1-(k+1)2^{-n}>2/3$ therefore equation \ref{eq: finally} is of the form:
            \[
            \Pr_{r}\big(
            f(x)=\phi(x,r,\mathsf{SAT}(\sigma(1,x,r)),\ldots, \mathsf{SAT}(\sigma(k, x, r)))
            \big)\geq2/3.
            \]
        \item For all $n$, all pairs $\{x_1,x_2\}\subset \{0,1\}^n$, all $i$ and $j$, if $r$ is chosen uniformly at random then $\sigma(j,x_1,r)$ and $\sigma(j,x_2,r)$ are identically distributed (as per their definition), therefore $\sigma'(i\#j,x_{1/2},\mathbf{r})$ is identically distributed too. This is because the only dependence on $x$ is through the identically distributed $\sigma$: $\sigma'(i\#j,x,\mathbf{r})= \sigma_\mathsf{B}(i, \sigma(j, x, r), r_j)$. This holds for all $r_i$, $r_j$. 
    \end{enumerate}
Thus fufilling the conditions for $L$ to be random reducible to $\mathsf{NP}$
\end{proof}

Finally, we can proceed with the proof of theorem \ref{thm: BPP is rr to NP}

\begin{proof}[Proof of theorem \ref{thm: BPP is rr to NP}]
Toda \cite{toda1991pp} showed that $\mathsf{PH\subseteq P^{\# P[1]}}$. Therefore, by the assumption of the theorem if $\mathsf{P^{\# P}= BPP_{\parallel}^{NP}=PH}$ then $\mathsf{P^{\# P} = P^{\# P[1]}}$. 

By lemma \ref{lma: P to 1 sharp P is rr to 1 sharp P} we know $\mathsf{P^{\# P}}$ is rr to $\mathsf{\# P}$. Under the assumption of this theorem, $\mathsf{\# P \subseteq BPP^{NP}_{\parallel}}$, $\mathsf{P^{\# P}}$ is rr to $\mathsf{BPP^{NP}_{\parallel}}$. By lemma \ref{lma: rr to BPP NP parallel is rr to NP} this means $\mathsf{P^{\# P}}$ is rr to $\mathsf{NP}$.
\end{proof}

The next result is heavily based on the work of Feigenbaum and Fortnow \cite{feigenbaum1990random}. They show that if $\mathsf{NP}$ is random self reducible then $\mathsf{coNP}$ is in $\mathsf{NP/poly}$, while the result we are trying to show is slightly different than this, the method is very similar.

\begin{theorem}\label{thm: rr to NP is NP/poly}
    Any language that is random-reducible to a language in $\mathsf{NP}$ is also in $\mathsf{coNP/poly \cap NP/poly}$.
\end{theorem}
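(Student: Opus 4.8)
The plan is to adapt the Feigenbaum--Fortnow argument, whose engine is precisely property (2) of Definition \ref{def: random reducible}: the reduction $\sigma$ \emph{hides} its input, so for each query index $i$ the instance $\sigma(i,x,r)$ has a distribution over uniform $r$ that does not depend on $x$. First I would apply Lemma \ref{lma: boost rr} to assume the reduction of $L$ to $L'\in\mathsf{NP}$ succeeds with probability at least $1-2^{-n}$ using $k=k(n)$ parallel queries. I would also record the trivial but useful observation that if $L$ is rr to $L'\in\mathsf{NP}$ then so is $\overline{L}$ (negate the output of $\phi$); hence it suffices to give a single construction placing any language rr to $\mathsf{NP}$ into $\mathsf{NP/poly}$, and then apply it to both $L$ and $\overline{L}$ to obtain membership in $\mathsf{NP/poly}\cap\mathsf{coNP/poly}$.

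The core step is to build an Arthur--Merlin protocol with polynomial advice that decides $L$. Writing $w_i:=\Pr_r[\sigma(i,x,r)\in L']$, property (2) guarantees that $w_i$ depends only on $n$ (and $i$), so the tuple $(w_1,\dots,w_k)$ --- a \emph{census} of the query distributions --- can be supplied as advice valid for every input of length $n$. On input $x$, Arthur draws $r$ to form the genuine queries $y_i=\sigma(i,x,r)$, and additionally draws many fresh strings $r'$ to produce independent extra samples $\sigma(i,x,r')$ from the same, input-independent, distribution; crucially, by the hiding property the genuine query is statistically indistinguishable from the padding samples. Merlin then labels every sampled instance yes/no and provides an $\mathsf{NP}$ witness for each instance he labels yes. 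Arthur verifies all witnesses and checks, for each $i$, that the number of confirmed yes-instances is consistent with the census $w_i$ via a Chernoff bound. Since witnesses cannot be forged, Merlin can never turn a no-instance into a yes; and the census prevents him from systematically withholding yes-instances, since doing so would drop the confirmed count below the advised level. Because the genuine query is anonymous among the padding, Merlin cannot selectively cheat on it, so with high probability every $y_i$ is answered correctly and Arthur recovers $L(x)$ by evaluating $\phi$.

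This exhibits $L\in\mathsf{AM/poly}$, and I would finish by invoking the standard inclusion $\mathsf{AM}\subseteq\mathsf{NP/poly}$ (equivalently $\mathsf{AM/poly}\subseteq\mathsf{NP/poly}$), obtained by absorbing Arthur's randomness into advice; applying the whole construction to $\overline{L}$ likewise yields $\overline{L}\in\mathsf{NP/poly}$, i.e. $L\in\mathsf{coNP/poly}$. I expect the delicate point to be the certification of the \emph{no}-answers, which is the genuinely $\mathsf{coNP}$-hard obstruction: a no-instance has no witness, so correctness must be forced indirectly through the census together with the hiding property, and one must be careful that this forcing survives the derandomization into $\mathsf{NP/poly}$. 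This is exactly where the argument leans on $\mathsf{AM}\subseteq\mathsf{NP/poly}$ rather than a naive fixing of Arthur's coins, which Merlin could otherwise exploit by cheating on the now-identifiable genuine query. Quantifying the probability that Merlin misses the hidden genuine query, and balancing the amount of padding against the census tolerance so that a union bound over all $k$ queries still yields error $o(1)$, is the main technical work.
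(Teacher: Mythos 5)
Your construction shares the paper's skeleton --- boost the reduction, give the census probabilities $w_i=\Pr_r[\sigma(i,x,r)\in L']$ as advice (valid input-independently by property (2)), force Merlin to certify yes-instances with $\mathsf{NP}$ witnesses so he can only lie in the ``yes $\to$ no'' direction, and catch bulk lying with a concentration bound; your complementation remark (negate $\phi$) is a clean equivalent of the paper's ``repeat the analysis for $\mathsf{coAM^{poly}}$''. But there is a genuine gap at the crucial last step. Your protocol hides the single genuine query among padding samples and its soundness relies on Merlin \emph{not knowing} which sample is genuine: this makes it a private-coin interactive proof, not an Arthur--Merlin (public-coin) protocol. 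The inclusion $\mathsf{AM}\subseteq\mathsf{NP/poly}$ that you invoke to finish is proved by exactly the ``naive fixing of Arthur's coins'' you say you are avoiding: amplify the error below $2^{-2n}$, union-bound over the $2^n$ inputs, and hard-wire a good random string as advice. That argument is sound only because public-coin soundness holds against a Merlin who sees the coins; there is no variant of it that preserves coin privacy, since advice is by definition visible to the prover. Once the randomness (equivalently, the identity of the genuine query) is in the advice, Merlin cheats on precisely the $k$ genuine instances, and your census check cannot catch him: the threshold must tolerate $\Omega(\sqrt{M})$ statistical fluctuation in a sample of size $M$, so $k$ targeted lies are invisible to it. Hiding is therefore essential to your soundness, and it cannot survive the passage to $\mathsf{NP/poly}$ as you describe it.

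The approach is repairable, but only with machinery you do not mention: first convert the private-coin protocol to public coin via Goldwasser--Sipser (constant-round $\mathsf{IP}$ equals $\mathsf{AM}$, and the transformation tolerates advice), and only then fix coins. The paper instead sidesteps the issue by designing a public-coin protocol from the start: Arthur sends $m=9k^3$ random strings in the clear, Merlin answers all $m$ \emph{genuine} runs of the reduction, and Arthur demands that $\phi$ accept on \emph{every} run. A cheating Merlin must then lie on essentially all $m$ runs, hence at least $m/k=9k^2$ times on some index $i$, which overshoots the census tolerance $2\sqrt{km}=6k^2$; no anonymity is needed, soundness holds against a coin-seeing Merlin, and the absorption of randomness into advice (via $\mathsf{AM^{poly}}=\mathsf{NP/poly}$, which the paper is careful to distinguish from $\mathsf{AM/poly}$) goes through. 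In short: your combinatorics are fine, but unanimity over many visible runs, rather than anonymity of one hidden run, is what makes the derandomization legitimate.
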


\begin{proof}
    Let $\mathsf{AM^{poly}}$ be $\mathsf{AM}$ with polynomial advice given to Arthur\footnote{ This is not the same as $\mathsf{AM/poly}$, as the latter must be an $\mathsf{AM}$ language for all advice, whereas the former only needs to be defined for the correct advice.}. It turns out that $\mathsf{AM^{poly}=NP/poly}$ and $\mathsf{coAM^{poly}=coNP/poly}$ \cite{feigenbaum1990random}.
    We will prove $L$ is in $\mathsf{AM^{poly}\cap coAM^{poly}}$, beginning with $L \in \mathsf{AM^{poly}}$.

    Suppose $L(x)$ is non-adaptively $k(n)$ random-reducible to $\mathsf{SAT}(x)$ with error probability $2^{-n}$. We will adopt the notation $y_i=\sigma(i,x,r)$. For instance size $n$ we will give Arthur the advice $(p_1,\ldots, p_k)$ where $p_i$ is the probability that $y_i=1$,
    \[
    p_i=\Pr_{r}(\mathsf{SAT}(\sigma(i, x,r))=1).
    \]
    As $\sigma(i, x,r)$ is distributed identically (given uniform random $r$) for all $x$ this advice does not depend on the input $x$.

    The proof system will consist of Arthur selecting $m=9k^3$ random strings, $\{r_j\}_{j\in[m]}$, and passing this to Merlin, these random strings defines which $\mathsf{SAT}$ queries, $y_{0,j},\ldots, y_{k,j}$, will be made. Merlin will hand back $m$ `transcripts'. These transcripts consist of a list $w_{i,j}$ which is either a witness for $y_{i,j}$ or is the string `NIL' (which is interpreted as the claim that $y_{i,j}\notin \mathsf{SAT}$). For each of the $m$ random strings we get the transcript:
    \[
    Transcript(x,r_j)= (w_{0,j},w_{1,j}, \ldots, w_{k,j})
    \]
    Arthur performs three checks:
    \begin{enumerate}[label={(\arabic*)}]
        \item For all $i,j$ either the witness $w_{i,j}$ is `NIL' or he checks the witness is valid for $y_{i,j}\in \mathsf{SAT}$
        \item Let $b_{i,j}$ be 0 if $w_{i,j}$ is `NIL' and one otherwise. For all $j\in [m]$ Arthur checks that $\phi(x,r_j, b_{0,j}, \ldots, b_{k,j})=1$, i.e. If Merlin has told the truth then $\phi$ will accept.
        \item For each $i\in [k]$ Arthur checks that more than $p_im-2\sqrt{km}$ of the $y_{i,j}$ have been proved to be in $\mathsf{SAT}$, if this condition does not hold he rejects.
    \end{enumerate}
    If these checks all pass, then Arthur accepts. 
    The trick in this proof is this final condition. Over the random strings, each $y_i$ has some probability of being in $\mathsf{SAT}$ with a given variance, by picking $m$ to be large we force this variance to be small. With high probability, a correct answer lies in this range. Since Merlin cannot lie about yes instances (since he must prove them), he can only lie about no instances. However, if he lied too much it would be clear that not enough of $y_i$ are in $\mathsf{SAT}$, thus we could probabilistically guess he was cheating. To exploit this, $m$ is picked precisely so Merlin cannot cheat on all $j$ without exceeding his `lying budget'. We will now formally show soundness and completeness.

    \textit{Completness.} If $x\in L$ and Merlin is honest, providing witnesses for all $y_{i,j}$ in $\mathsf{SAT}$, condition (1) will always hold. Condition (2) holds with probability at least $1-2^{-n}$ (by the definition of rr). Therefore we just need to show condition (3) holds with high probability.

    Let $Z_{i,j}:=(y_{i,j}=1)$ and $Z_{i}=\sum_{j=1}^m Z_{i,j}$. We defined our advice so $p_i= \mathbb{E}[Z_i]/m$, we can derive that $\operatorname{Var}(X)=p_i(1-p_i)m<m$. We can use these properties and Chebyshev's inequality to bound our probability of failing check (2).
    \begin{align*}
        \Pr(Z_i\leq p_im-2\sqrt{km})\\
        \leq\Pr(\norm{Z_i-p_im}\geq 2\sqrt{km})\\
        =\Pr(\norm{Z_i-\Bar{Z_i}}\geq 2\sqrt{km})\\
        \leq \operatorname{Var}(X)/4km\leq 1/(4k)\leq1/4
    \end{align*}
    The probability of failing any check given $x\in L$ is less than $\frac{1}{4}+2^{-n}$ which is less than $1/3$ for large enough $n$. This proves completeness

    \textit{Soundness.} Suppose $x \notin L$, if Arthur accepts then all 3 checks must have passed for all $j\in[m]$. If some $y_{i,j}$ is in $\mathsf{SAT}$ but Merlin has provided $w_{i,j}=`NIL'$ then we say Merlin has lied about $y_{i,j}$, if check (1) passes Merlin has not lied about any `yes' instances, so we disregard this case.
    The probability of the random reduction failing is $2^{-n}$ without any lies, so for all $m$ reductions to fail Merlin must lie $m$ times with probability $(1-2^{-n})^m> 1-m2^{-n}$.

    If Merlin lies $m$ times there must be an $i$ that he claims at least $m/k$ of the $y_{i,j}$ are not in $\mathsf{SAT}$ when they are. To satisfy condition (3) at least $p_im-2\sqrt{km}\mathbf{+m/k}$ of the $y_{i,j}$ must be in $\mathsf{SAT}$ to leave `room' for Merlin to lie $m/k$ times without violating (3). The probability of this is given by a Chernoff bound on the random variable $Z_i$, as defined above.
    \begin{align*}
    \Pr\left(Z_i\geq p_im-2\sqrt{km}+m/k\right)\\
    =\Pr\left(Z_i-p_im\geq m/k-2\sqrt{km}\right)\\
    =\Pr\left(Z_i-p_im\geq 9k^3/k-2\sqrt{k9k^3}\right)\\
    =\Pr\left(Z_i-p_im\geq 3k^2\right)\\
    \leq e^{-2\times 9k^4/9k^3}=e^{-2\times k}\\
    \leq 1/(4k)
    \end{align*}
    Combining this with the normal probability that the random reduction fails even with correct oracle answers (which had a $2^{-n}$ probability) gives an acceptance probability given $x\notin L$ of less than $m2^{-n}+1/(4k)<1/3$ for large enough $n$. This proves soundness.
    
    The previous analysis can just as easily be repeated for $\mathsf{coAM^{poly}}$ with Merlin proving no instances, giving $L\in \mathsf{AM^{poly} \cap coAM^{poly}}$. By Feigenbaum and Fortnow we know this equals $\mathsf{NP/poly \cap coNP/poly}$.
\end{proof}

To prove the next Theorem we must recall a crucial result by Arvind and Köbler, which checkability \cite{ARVIND2002257,zoo}. The notion of checkability is somewhat involved but intuitively has to do with whether efficent programs that decide the set can themselves be efficently checked. Since we will just need the checkability of $\mathsf{PP}$ and $\mathsf{\# P}$ to deploy the following theorem we will not formally define it, instead we refer the reader to \cite{ARVIND2002257, blum1995designing}.

\begin{theorem*}[\cite{ARVIND2002257}, Theorem 22]
    Checkable sets in $\mathsf{NP/poly \cap coNP/poly}$ are low for $\mathsf{\Sigma_2^P}$.
\end{theorem*}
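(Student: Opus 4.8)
The goal is the nontrivial inclusion $\Sigma_2^{P,L}\subseteq\Sigma_2^P$ (the reverse is immediate), and I would prove it by simulating an arbitrary $\exists y\,\forall z\,R^L(x,y,z)$ oracle computation while replacing every query to $L$ by a certified guess. The starting observation is the one already exploited above, that $\mathsf{NP/poly}=\mathsf{AM^{poly}}$ and $\mathsf{coNP/poly}=\mathsf{coAM^{poly}}$: for the \emph{correct} polynomial advice a set $L\in\mathsf{NP/poly}\cap\mathsf{coNP/poly}$ has short certificates for both membership and non-membership, i.e.\ there are poly-time predicates $V_1,V_2$ and advice $(a,b)$ with $q\in L\iff\exists w\,V_1(q,w,a)$ and $q\notin L\iff\exists w'\,V_2(q,w',b)$. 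If one could trust the advice, the routine certificate-substitution argument finishes: the set $\{(x,y):\forall z\,R^L(x,y,z)\}$ becomes an ordinary $\mathsf{coNP}$ set once each oracle answer is replaced by a guessed bit together with a $V_1$- or $V_2$-certificate, and the outer $\exists y$ then yields a genuine $\Sigma_2^P$ predicate. This is exactly the standard fact that $\mathsf{NP}\cap\mathsf{coNP}$ oracles are low for $\Sigma_2^P$.

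The difficulty is that the advice must itself be guessed inside the simulation, and wrong advice is dangerous: a malicious $(a,b)$ can make both a $V_1$- and a $V_2$-certificate available for the same $q$, so the existential player could certify whichever answer it prefers and force acceptance of inputs not in $A$. Soundness therefore cannot rest on the advice being honest, and this is precisely the gap the checkability hypothesis must close. I would resolve it with an amplified instance/program checker $C$ for $L$, tuned so that on the relevant length its error is at most $2^{-n'}$ while it makes at most $s$ oracle queries, with $n'\gg s$. To evaluate a single $L(q)$ inside the computation, run $C$ on $q$ and answer each of its oracle queries by a guessed bit carrying a $V_1$- or $V_2$-certificate under $(a,b)$, then output the bit $C$ returns. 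The two checker guarantees now split the labour cleanly: for completeness, correct advice lets us certify the true answers, so $C$ sees the genuine oracle and returns $L(q)$; for soundness we trust nothing, since for each fixed answer-sequence $C$ errs on at most a $2^{-n'}$ fraction of its random strings, and a union bound over the $\le 2^{s}$ possible answer-sequences shows that on all but a $2^{\,s-n'}$ fraction of random strings \emph{no} certified answers can drive $C$ to a wrong bit. Thus a certified checker run computes $L(q)$ correctly regardless of whether the advice is honest.

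With a self-verifying subroutine for each $L$-query in hand, I would fold the certificate guesses and the checker's randomness into the two available quantifiers: the existential phase guesses $y$, the advice $(a,b)$, and certificates for the oracle answers needed along the computation, while the universal phase ranges over $z$ together with the checker's random strings and verifies every certificate and every checker verdict. The exponentially small two-sided error produced by the $n'\gg s$ amplification is what lets the checker's randomness be absorbed by the standard inclusion of bounded-error computation within one layer of the hierarchy, and one may package the whole bookkeeping by appealing to the (advice-free) lowness of $\mathsf{AM}\cap\mathsf{coAM}$ for $\Sigma_2^P$ — the point being that checkability has let us \emph{verify} the otherwise-untrusted nonuniform advice from inside the computation.

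The step I expect to be the main obstacle is exactly this last interleaving. The base $\Sigma_2^{P,L}$ machine makes its $L$-queries adaptively, and the queries generated during the universal $z$-phase cannot be handed existential witnesses after the fact. The escape is structural: in $\mathsf{NP/poly}\cap\mathsf{coNP/poly}$ both a YES- and a NO-answer carry an existential certificate, so each query answer can be committed to in the existential phase and merely checked in the universal phase, with the checker guaranteeing that no committed answer survives verification unless it is correct. Making the quantitative margins line up — choosing $n'\gg s$ per query and controlling the accumulated soundness error across the polynomially many queries so that it stays below a constant, all while never exceeding two quantifier alternations — is the delicate heart of the argument.
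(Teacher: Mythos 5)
You should first note that the paper does not prove this statement at all: it is imported verbatim as Theorem 22 of Arvind and K\"obler \cite{ARVIND2002257}, so your proposal can only be judged against the original argument, not an in-paper proof. Judged that way, it contains a genuine quantitative gap exactly where you invoke checkability. Your soundness step amplifies the checker to error $2^{-n'}$ while making $s$ oracle queries and then union-bounds over the $\le 2^{s}$ adaptive answer-sequences, so you need $n' \gg s$. That is unachievable: Blum--Kannan soundness is stated per \emph{fixed} program, and driving the error down to $2^{-n'}$ by independent repetition costs $\Omega(n')$ runs, each of which makes at least one query, so $s = \Omega(n')$; concretely, majority-of-$k$ amplification gives error about $2^{-k/8}$ against $k s_0$ queries, and the union bound $2^{k s_0}\cdot 2^{-k/8}$ diverges rather than vanishes. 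The adaptivity problem you introduced this bound to solve has a cleaner escape that you do not use: a fixed guessed advice pair $(a,b)$ induces a \emph{fixed} certified oracle (resolve ambiguity canonically, e.g.\ answer $1$ iff a $V_1$-certificate exists), so fixed-program soundness applies directly with no union bound over answer sequences. But that oracle's answers are $\mathsf{NP}$-type facts that must be \emph{computed}, not handed over as guessed witnesses, which exposes the second gap.

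That second, structural gap --- the one you yourself flag as ``the delicate heart'' --- is never actually closed. The queries, both the base machine's and the checker's, are generated inside the universal phase (they depend on $z$ and on the checker's randomness $r$), so existential certificates cannot be supplied for them: there are exponentially many $(z,r)$ branches, each producing its own query sequence, and no polynomial-size existential ``commitment'' covers them all; conversely, deciding ``does a $V_1$-certificate for $q$ exist under advice $a$'' inside the $\forall$-phase requires an $\mathsf{NP}$ oracle, which turns the naive simulation into an $\exists\,\forall\,\mathsf{P^{NP}}$ predicate, i.e.\ $\mathsf{\Sigma_3^P}$ rather than $\mathsf{\Sigma_2^P}$. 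Your closing appeal to the lowness of $\mathsf{AM \cap coAM}$ for $\mathsf{\Sigma_2^P}$ cannot absorb this either, since that result concerns the uniform classes, and untrusted advice is precisely what separates $\mathsf{AM^{poly}}$ from $\mathsf{AM}$. So the proposal correctly isolates where checkability must enter and why naive advice-guessing is unsound, but both the quantitative mechanism (the union bound) and the final quantifier bookkeeping fail as written; what you have is an accurate map of the obstacle that Arvind and K\"obler's argument overcomes, not a proof of the theorem.
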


This result is imperative to proceed, as while it is true that $\mathsf{NP\subset (NP\cap coNP)/poly}$ implies $\mathsf{PH=ZPP^{NP}}$, the same is not true for $\mathsf{NP\subset (NP/poly)\cap (coNP/poly)}$.

\begin{lemma}\label{lma: sharp P in conp/poly}
    If $\mathsf{P^{\# P}}\subset (\mathsf{coNP/poly \cap NP/poly})$ then $\mathsf{P^{\# P}=\Sigma_2^P\cap \Pi_2^P}$
\end{lemma}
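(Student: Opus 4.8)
The plan is to prove the two inclusions of the claimed equality separately, with essentially all the work falling on one side. The containment $\Sigma_2^P \cap \Pi_2^P \subseteq \mathsf{P^{\#P}}$ is immediate from Toda's theorem $\mathsf{PH}\subseteq \mathsf{P^{\#P[1]}}$ already used above: indeed $\Sigma_2^P\cap\Pi_2^P\subseteq\Sigma_2^P\subseteq\mathsf{PH}\subseteq\mathsf{P^{\#P}}$. So the real content is the reverse containment $\mathsf{P^{\#P}}\subseteq\Sigma_2^P\cap\Pi_2^P$, and its engine is the Arvind--K\"obler lowness theorem stated just above.

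For that direction I would first fix a $\mathsf{PP}$-complete set $A$ that is checkable; the checkability of $\mathsf{PP}$, together with that of $\mathsf{\#P}$, is precisely what is invoked at this step (and is the non-relativizing ingredient the paper flags). Since $\mathsf{PP}\subseteq\mathsf{P^{PP}}=\mathsf{P^{\#P}}$, the hypothesis $\mathsf{P^{\#P}}\subseteq\mathsf{coNP/poly}\cap\mathsf{NP/poly}$ places $A$ inside $\mathsf{coNP/poly}\cap\mathsf{NP/poly}$. Thus $A$ is a checkable set lying in $\mathsf{NP/poly}\cap\mathsf{coNP/poly}$, and the quoted theorem makes it low for $\Sigma_2^P$, that is, $(\Sigma_2^P)^{A}=\Sigma_2^P$.

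Next I would upgrade lowness of the single set $A$ to lowness of the entire class: because $A$ is $\mathsf{PP}$-complete under polynomial-time many-one reductions, any query to a $\mathsf{PP}$ oracle can be answered by a polynomial-time reduction to a query on $A$, so $(\Sigma_2^P)^{\mathsf{PP}}=(\Sigma_2^P)^{A}=\Sigma_2^P$. Consequently $\mathsf{P^{\#P}}=\mathsf{P^{PP}}\subseteq(\Sigma_2^P)^{\mathsf{PP}}=\Sigma_2^P$. Finally, since $\mathsf{P^{\#P}}$ is a deterministic class and hence closed under complement, $\mathsf{P^{\#P}}\subseteq\Sigma_2^P$ automatically yields $\mathsf{P^{\#P}}\subseteq\Pi_2^P$ (complement each language), whence $\mathsf{P^{\#P}}\subseteq\Sigma_2^P\cap\Pi_2^P$; combined with the easy inclusion this gives equality.

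The main obstacle is not any long calculation but the careful hand-off to the cited lowness theorem: one must check that the chosen complete set is genuinely checkable (so the hypothesis of Arvind--K\"obler is met) and that lowness for a single $\mathsf{PP}$-complete set legitimately transfers to the full $\mathsf{PP}$ oracle via completeness. Both are routine once set up, so the lemma is really a short assembly of Toda's theorem, the hypothesis, checkability of $\mathsf{PP}$/$\mathsf{\#P}$, and the lowness theorem; the only bookkeeping that needs care is the passage between $\mathsf{PP}$ and $\mathsf{\#P}$ and the use of closure under complement to reach $\Pi_2^P$.
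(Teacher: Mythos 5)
Your proof is correct and takes essentially the same route as the paper: both use the hypothesis to place $\mathsf{PP}$ inside $\mathsf{NP/poly}\cap\mathsf{coNP/poly}$, invoke the Arvind--K\"obler theorem via a checkable $\mathsf{PP}$-complete set to obtain $(\Sigma_2^P)^{\mathsf{PP}}=\Sigma_2^P$, and conclude $\mathsf{P^{\#P}}=\mathsf{P^{PP}}\subseteq\Sigma_2^P$. The only difference is cosmetic, in the final bookkeeping: you reach $\Pi_2^P$ by closure of $\mathsf{P^{\#P}}$ under complement, whereas the paper uses Toda's inclusions $\Sigma_2^P,\Pi_2^P\subseteq\mathsf{P^{\#P}}$ to force $\Sigma_2^P=\mathsf{P^{\#P}}$ and $\Pi_2^P\subseteq\Sigma_2^P$ and then chains inclusions --- both finishes are valid.
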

\begin{proof}
By Toda $\mathsf{P^{\#P}=P^{PP}}$ \cite{toda1991pp}.    
    If $\mathsf{P^{PP}}$ is in $\mathsf{\mathsf{coNP/poly \cap NP/poly}}$ then clearly so is $\mathsf{PP}$, combining this with the existence of $\mathsf{PP}$-complete checkable sets \cite{ARVIND2002257} we know that $\mathsf{PP}$ is low for $\mathsf{\Sigma_2^P}$.

    This lowness implies we can put  $\mathsf{P^{\# P}}$ in $\mathsf{\Sigma_2^P}$ by the following series of inclusions
    \[
    \mathsf{
    P^{\#P}=P^{PP}\subseteq (\Sigma_2^P)^{PP} = \Sigma_2^P.
    }
    \]

    As $\mathsf{\Sigma_2^P\subseteq P^{\# P}}$ by Toda \cite{toda1991pp} we can fix the previous inclusion into an equality: $\mathsf{\Sigma_2^P =  P^{\# P}}$.

    As $\mathsf{\Pi_2^P \subseteq P^{\# P}}$ (Toda \cite{toda1991pp}) we get $\mathsf{\Pi_2^P\subseteq \Sigma_2^P}$.
    Which gives us the final equality:
    \[
    \mathsf{P^{\# P} \subseteq \Sigma_2^P \subseteq P^{\Pi_2^P}= P^{\Sigma_2^P\cap\Pi_2^P}=\Sigma_2^P\cap\Pi_2^P}
    \]
\end{proof}

The previous lemma has achieved a collapse to the second level but we can collapse to $\mathsf{ZPP^{NP}}$ by using the proveability of languages in $\mathsf{\Sigma_2^P\cap \Pi_2^P}$ to make the $\mathsf{BPP^{NP}_{\parallel}}$ algorithm zero-error.

\begin{lemma}\label{lma: sharp P in ZPP}
    All languages in $\mathsf{BPP^{NP}\cap\Sigma_2^P\cap \Pi_2^P}$ are in $\mathsf{ZPP^{NP}}$
\end{lemma}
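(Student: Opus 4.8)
The plan is to build a zero-error randomized $\mathsf{NP}$-oracle machine following the \emph{guess-and-verify} paradigm: with high probability the machine will exhibit a short certificate whose validity a deterministic $\mathsf{P^{NP}}$ procedure can confirm, and it will commit to an answer only once such a certificate has been checked, so it can never err. First I would fix the certificates. Since $L\in\mathsf{\Sigma_2^P}$ I write $x\in L \iff \exists y\,\forall z\,R(x,y,z)$, and since $L\in\mathsf{\Pi_2^P}$ I write the complement in $\mathsf{\Sigma_2^P}$ form $x\notin L \iff \exists y'\,\forall z'\,R'(x,y',z')$. The point is that a candidate witness is cheap to check: given $y$, the statement $\forall z\,R(x,y,z)$ is a $\mathsf{coNP}$ predicate, decidable with a single $\mathsf{NP}$ query, and likewise for $y'$. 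Hence a valid $y$ is a $\mathsf{P^{NP}}$-verifiable proof that $x\in L$, and a valid $y'$ is a $\mathsf{P^{NP}}$-verifiable proof that $x\notin L$.

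Next I would produce such a witness with high probability. After amplifying the $\mathsf{BPP^{NP}}$ decider for $L$ to error $2^{-n}$, I would run a bit-by-bit prefix search for the $\mathsf{\Sigma_2^P}$ witness $y$: at each step the question ``does some completion of the current prefix $p$ satisfy $\forall z\,R(x,y,z)$?'' is itself a $\mathsf{\Sigma_2^P}$ language, and in the collapsed setting of Theorem~\ref{thm: main} we have $\mathsf{\Sigma_2^P}\subseteq\mathsf{P^{\# P}}=\mathsf{BPP_{\parallel}^{NP}}\subseteq\mathsf{BPP^{NP}}$, so each such query can be answered (with exponentially small error after amplification) by a $\mathsf{BPP^{NP}}$ subroutine that the $\mathsf{ZPP^{NP}}$ machine simulates internally. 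Running the search for $y$ and, symmetrically, for $y'$, I then hand each returned candidate to the deterministic $\mathsf{P^{NP}}$ check of the previous paragraph; the machine outputs \textsf{yes} or \textsf{no} only if the corresponding check passes, and outputs ``$?$'' otherwise. Soundness of the deterministic check gives zero error: an accepted $y$ forces $x\in L$ and an accepted $y'$ forces $x\notin L$, so a wrong answer is impossible regardless of the internal randomness. A union bound over the $\mathrm{poly}(n)$ amplified queries bounds the probability that neither search yields a verified certificate by $\tfrac12$ (indeed $2^{-\Omega(n)}$), placing $L$ in $\mathsf{ZPP^{NP}}$.

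The main obstacle is precisely the search step. Verifying a $\mathsf{\Sigma_2^P}$ witness costs only one $\mathsf{NP}$ query, but \emph{finding} one is a priori a $\mathsf{\Sigma_2}$-search task, which a bare $\mathsf{NP}$ oracle cannot perform by ordinary prefix search; this is exactly where the $\mathsf{BPP^{NP}}$ hypothesis must be spent, since it is what lets the collapsed $\mathsf{\Sigma_2^P}$ prefix-search language be decided within $\mathsf{BPP^{NP}}$ and thereby drive the search. The clean separation of concerns — randomness only to \emph{find} a certificate, and a sound deterministic $\mathsf{P^{NP}}$ check to \emph{accept} it — is what upgrades the two-sided error of $\mathsf{BPP^{NP}}$ to the zero error of $\mathsf{ZPP^{NP}}$, and the only remaining care is the routine amplification ensuring that the adaptive chain of search queries is correct with overwhelming probability.
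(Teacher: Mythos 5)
Your proposal is correct and follows essentially the same route as the paper: use the randomized $\mathsf{NP}$-oracle machinery to produce a candidate $\mathsf{\Sigma_2^P}$ (or $\mathsf{\Pi_2^P}$) witness, check it deterministically in $\mathsf{P^{NP}}$ (one $\mathsf{NP}$ query per candidate), and commit to an answer only when a verified certificate is in hand, so the machine never errs. You additionally spell out the witness-finding step --- prefix search, justified by $\mathsf{\Sigma_2^P \subseteq BPP^{NP}}$ in the collapsed setting where the lemma is actually invoked --- which the paper compresses into ``use the $\mathsf{BPP^{NP}}$ program to determine a proof''; this is a faithful, indeed more careful, rendering of the intended argument.
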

\begin{proof}
    Fix some $L\in \mathsf{BPP^{NP}\cap\Sigma_2^P\cap \Pi_2^P}$. We can check if $x\in L$ using the $\mathsf{BPP^{NP}}$ algorithm. Use the $\mathsf{BPP^{NP}}$ program to determine a proof of this fact for either the $\mathsf{\Sigma_2^P}$ verifier or the $\mathsf{\Pi_2^P}$ verifier. We can test either proof in $\mathsf{P^{NP}}$ using the verifiers from either $\mathsf{\Sigma_2^{P}}$ or $\mathsf{\Pi_2^{P}}$. With probability $1-2^{-n}$ the $\mathsf{BPP^{NP}}$ algorithm succeeds in producing a valid proof in polynomial time, any valid proof proves or disproves $x\in L$. Therefore in polynomial time the algorithm has successfully determined if $x\in L$ with probability $1-2^{-n}$ and never incorrectly answers, placing $L\in \mathsf{ZPP^{NP}}$.
\end{proof}

This completes the proof of our main theorem which is given below.

\begin{theorem*}
    If $\mathsf{P^{\# P} = BPP^{NP}_{\parallel}}$ then $\mathsf{P^{\# P} = ZPP^{NP}}$
\end{theorem*}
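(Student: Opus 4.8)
The plan is to treat the final theorem as an assembly of the structural results already proved, carefully tracking which hypothesis each invocation requires, and then to close the equality with an easy reverse containment.

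First I would run the random-reducibility chain. The assumption $\mathsf{P^{\# P} = BPP^{NP}_{\parallel}}$ is precisely what feeds Theorem \ref{thm: BPP is rr to NP}, which gives that $\mathsf{P^{\# P}}$ is random reducible to $\mathsf{NP}$. Applying Theorem \ref{thm: rr to NP is NP/poly} to this conclusion places $\mathsf{P^{\# P} \subseteq coNP/poly \cap NP/poly}$. At this point the nonuniform advice has served its purpose, and Lemma \ref{lma: sharp P in conp/poly} upgrades the $\mathsf{/poly}$ membership to the genuine uniform collapse $\mathsf{P^{\# P} = \Sigma_2^P \cap \Pi_2^P}$; the engine there is the checkability of $\mathsf{PP}$ together with the Arvind--K\"obler lowness result, which is exactly the non-relativising ingredient that lets one strip the advice.

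Second I would push the collapse down to $\mathsf{ZPP^{NP}}$. The key bookkeeping point is that Lemma \ref{lma: sharp P in ZPP} asks for membership in $\mathsf{BPP^{NP}}$, not merely in $\mathsf{BPP^{NP}_{\parallel}}$, so I would first record the trivial containment $\mathsf{BPP^{NP}_{\parallel} \subseteq BPP^{NP}}$; combined with the hypothesis this yields $\mathsf{P^{\# P} = BPP^{NP}_{\parallel} \subseteq BPP^{NP}}$. Since we have just shown $\mathsf{P^{\# P} = \Sigma_2^P \cap \Pi_2^P}$, every language of $\mathsf{P^{\# P}}$ sits in $\mathsf{BPP^{NP} \cap \Sigma_2^P \cap \Pi_2^P}$, and Lemma \ref{lma: sharp P in ZPP} then delivers $\mathsf{P^{\# P} \subseteq ZPP^{NP}}$, the $\mathsf{BPP^{NP}}$ machine being used only to guess a proof that is checked deterministically against the $\mathsf{\Sigma_2^P}$ or $\mathsf{\Pi_2^P}$ verifier in $\mathsf{P^{NP}}$, thereby removing one-sided error.

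Finally I would close the loop with the unconditional reverse inclusion $\mathsf{ZPP^{NP} \subseteq P^{\# P}}$, which follows from $\mathsf{ZPP^{NP} \subseteq BPP^{NP} \subseteq PH}$ (relativised Sipser--Lautemann) together with Toda's $\mathsf{PH \subseteq P^{\# P}}$; combining this with the forward inclusion above gives the desired equality $\mathsf{P^{\# P} = ZPP^{NP}}$. I expect the genuine difficulty of the theorem to live entirely in the two feeder results rather than in this final assembly: the conversion of random reducibility to $\mathsf{NP}$ into $\mathsf{/poly}$ membership (Theorem \ref{thm: rr to NP is NP/poly}) and the checkability-based removal of advice (Lemma \ref{lma: sharp P in conp/poly}). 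The only real subtlety in the assembly itself is the one flagged above, namely ensuring the $\mathsf{BPP^{NP}}$ (rather than $\mathsf{BPP^{NP}_{\parallel}}$) hypothesis of Lemma \ref{lma: sharp P in ZPP} is met before invoking it.
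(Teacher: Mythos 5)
Your proposal is correct and follows essentially the same route as the paper's own proof: the chain Theorem~\ref{thm: BPP is rr to NP} $\rightarrow$ Theorem~\ref{thm: rr to NP is NP/poly} $\rightarrow$ Lemma~\ref{lma: sharp P in conp/poly} $\rightarrow$ Lemma~\ref{lma: sharp P in ZPP}. In fact you are slightly more careful than the paper in two spots it leaves implicit, namely recording $\mathsf{BPP^{NP}_{\parallel} \subseteq BPP^{NP}}$ before invoking Lemma~\ref{lma: sharp P in ZPP} and supplying the reverse inclusion $\mathsf{ZPP^{NP} \subseteq PH \subseteq P^{\# P}}$ to turn the containment into an equality.
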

\begin{proof}[Summary of proof]
    If $\mathsf{P^{\# P} = BPP^{NP}_{\parallel}}$ then $\mathsf{P^{\#P}}$ is random reducible to $\mathsf{NP}$ (Theorem \ref{lma: perm is rsr}).

    If $\mathsf{P^{\# P}}$ is random reducible to $\mathsf{NP}$ then $\mathsf{P^{\#P}}$ is in $\mathsf{coNP/poly \cap NP/poly}$ (Theorem \ref{thm: rr to NP is NP/poly}).

    If $\mathsf{P^{\#P}\subseteq coNP/poly \cap NP/poly}$ then $\mathsf{P^{\#P}} = \mathsf{\Sigma_2^P\cap \Pi_2^P}$ (Lemma \ref{lma: sharp P in conp/poly}).

    If $\mathsf{P^{\#P}} = \mathsf{\Sigma_2^P\cap \Pi_2^P}$ and $\mathsf{P^{\#P}= BPP^{NP}}$ then $\mathsf{P^{\#P}= ZPP^{NP}}$ (Lemma \ref{lma: sharp P in ZPP}).
\end{proof}

\begin{remark*}
    Interestingly we can perform all of the oracle calls required to produce the proof of $x\in L$ in one parallel step, it then only takes one extra Oracle call to check this proof. Thus it is possible to answer correctly with probability $1-2^{-n}$ or with `I don't know' after only 2 rounds of parallel $\mathsf{NP}$ oracle calls.
\end{remark*}

\begin{remark*}
    Feigenbaum and Fortnow \cite{feigenbaum1990random} showed their proof applies beyond non-adaptive random self-reducibility, up to logarithmically many adaptive steps are allowed. For the same reasons, our proof could be extended to logarithmically many rounds of polynomially many parallel oracle calls.
\end{remark*}

The following theorem formalises the `natural interpretation' of our result given in the intro.

\begin{theorem}[Natural interpretation]\label{thm: natural}
    If $\mathsf{P^{ExactCount}_{\parallel}}=\mathsf{P^{ApproxCount}_{\parallel}}$ then $\mathsf{P^{\#P}=ZPP^{NP}}$
\end{theorem}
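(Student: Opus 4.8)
The plan is to reduce Theorem~\ref{thm: natural} to the already-proven main theorem (Theorem~\ref{thm: main}) by establishing two complexity-class identifications: that $\mathsf{P^{ExactCount}_{\parallel}}$ coincides with $\mathsf{P^{\# P}}$, and that $\mathsf{P^{ApproxCount}_{\parallel}}$ coincides with $\mathsf{BPP^{NP}_{\parallel}}$. Once these two identifications are in hand, the hypothesis $\mathsf{P^{ExactCount}_{\parallel}}=\mathsf{P^{ApproxCount}_{\parallel}}$ immediately becomes the hypothesis $\mathsf{P^{\# P}}=\mathsf{BPP^{NP}_{\parallel}}$ of Theorem~\ref{thm: main}, and the conclusion $\mathsf{P^{\# P}=ZPP^{NP}}$ follows directly. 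So the entire content of this theorem is the two class equalities, which should be regarded as essentially definitional bookkeeping that connects the counting-oracle language to the quantum-motivated language.

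First I would handle the exact-counting side. An exact counting oracle returns, for a given polynomial-time machine and input, the exact number of accepting paths, which is precisely an $\mathsf{\# P}$ function value; conversely any $\mathsf{\# P}$ query can be phrased as an exact-counting query. The only subtlety is the parallel restriction: I must argue that $\mathsf{P^{ExactCount}_{\parallel}}=\mathsf{P^{\# P}}$ rather than $\mathsf{P^{\# P[1]}}$ or the adaptive version. Here I would invoke the fact, already used in the proof of Theorem~\ref{thm: BPP is rr to NP} via Toda~\cite{toda1991pp}, that under the relevant collapse $\mathsf{P^{\# P}}=\mathsf{P^{\# P[1]}}$; more simply, a single $\mathsf{\# P}$ call already captures all of $\mathsf{P^{\# P}}$ up to the collapse machinery, so the number of parallel exact-counting calls does not expand the class beyond $\mathsf{P^{\# P}}$. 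I would state plainly that $\mathsf{P^{ExactCount}_{\parallel}}=\mathsf{P^{\# P}}$, citing the standard equivalence between $\mathsf{\# P}$ functions and exact counting from the footnote's definition.

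Next I would handle the approximate-counting side, which is where the real (if still standard) work lies. The claim is $\mathsf{P^{ApproxCount}_{\parallel}}=\mathsf{BPP^{NP}_{\parallel}}$. The forward containment uses Stockmeyer's approximate counting: a multiplicative factor-$2$ approximation to a $\mathsf{\# P}$ count can be computed in $\mathsf{BPP^{NP}}$, and since the approximate-counting calls are all made in a single parallel round, the induced $\mathsf{SAT}$-type queries can likewise be batched into one parallel round, yielding a $\mathsf{BPP^{NP}_{\parallel}}$ algorithm. The reverse containment observes that an $\mathsf{NP}$ (i.e. $\mathsf{SAT}$) query is a degenerate approximate-counting query (distinguishing zero from nonzero witnesses), so parallel $\mathsf{NP}$ queries are simulable by parallel approximate-counting queries. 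I expect the main obstacle to be the careful preservation of the parallel structure across the Stockmeyer reduction: one must verify that turning each approximate-counting query into its randomized-$\mathsf{NP}$ realization keeps all queries non-adaptive and in a single round, so that $\mathsf{P^{ApproxCount}_{\parallel}}$ maps into $\mathsf{BPP^{NP}_{\parallel}}$ and not into the adaptive class $\mathsf{BPP^{NP}}$. I would address this by noting that Stockmeyer's hashing-based estimator makes all its $\mathsf{NP}$ queries in parallel, so composing one parallel layer of $\mathsf{ApproxCount}$ with this fixed-round estimator stays within one round of parallel $\mathsf{NP}$ calls.

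Having established $\mathsf{P^{ExactCount}_{\parallel}}=\mathsf{P^{\# P}}$ and $\mathsf{P^{ApproxCount}_{\parallel}}=\mathsf{BPP^{NP}_{\parallel}}$, the proof concludes in one line: the hypothesis of Theorem~\ref{thm: natural} rewrites as $\mathsf{P^{\# P}}=\mathsf{BPP^{NP}_{\parallel}}$, which is exactly the hypothesis of Theorem~\ref{thm: main}, so its conclusion $\mathsf{P^{\# P}=ZPP^{NP}}$ holds. I would present the two equalities as short lemmas or inline justifications and then apply Theorem~\ref{thm: main} as a black box.
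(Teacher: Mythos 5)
Your overall strategy --- rewrite both oracle classes and feed the resulting hypothesis into Theorem~\ref{thm: main} --- is the right instinct, but both of your claimed identifications assert more than is known, and one of them hides an open problem. The serious gap is the claim $\mathsf{P^{ApproxCount}_{\parallel}}=\mathsf{BPP^{NP}_{\parallel}}$. The containment $\mathsf{P^{ApproxCount}_{\parallel}}\subseteq\mathsf{BPP^{NP}_{\parallel}}$ is fine (parallel Stockmeyer, as in Lemma~\ref{lma:better stockmeyer}), but your argument for the reverse containment only explains how to replace each $\mathsf{NP}$ query by an approximate-counting query; it never addresses the randomness of the $\mathsf{BPP}$ base machine, which a deterministic $\mathsf{P}$ base machine must somehow eliminate. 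That randomness cannot simply be absorbed into the oracle: the predicate ``the $\mathsf{BPP^{NP}_{\parallel}}$ machine accepts on random string $r$'' is not a polynomial-time predicate (it depends on the answers to the $\mathsf{NP}$ queries, including the negative ones, which have no witnesses), so it does not define a $\mathsf{\#P}$ function that an $\mathsf{ApproxCount}$ oracle could be asked about, and the query sets arising over all $r$ are exponentially numerous. Indeed, by the O'Donnell--Say result \cite{o2018weakness} that the paper relies on, $\mathsf{P^{ApproxCount}_{\parallel}}=\mathsf{PostBPP}$, so your claimed equality is literally the assertion $\mathsf{PostBPP}=\mathsf{BPP^{NP}_{\parallel}}$, which is open; the paper only ever uses the known containment $\mathsf{PostBPP}\subseteq\mathsf{BPP^{NP}_{\parallel}}$.

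Your first identification, $\mathsf{P^{ExactCount}_{\parallel}}=\mathsf{P^{\#P}}$ with adaptive queries on the right, is also not something you can ``state plainly'': unconditionally one only gets $\mathsf{P^{ExactCount}_{\parallel}}=\mathsf{P^{\#P}_{\parallel}}=\mathsf{P^{\#P[1]}}$ (parallel $\mathsf{\#P}$ queries can be merged into a single one), while $\mathsf{P^{\#P}}=\mathsf{P^{\#P[1]}}$ is derived in the paper only conditionally, via Toda, from a hypothesis of exactly the kind you are trying to establish --- so invoking that collapse here is circular as stated. Both problems disappear along the paper's route: rewrite the hypothesis as $\mathsf{P^{\#P}_{\parallel}}=\mathsf{PostBPP}$ (keeping the parallel subscript on the exact-counting side and using O'Donnell--Say on the approximate side); note that $\mathsf{PP}\subseteq\mathsf{P^{\#P}_{\parallel}}=\mathsf{PostBPP}\subseteq\mathsf{PP}$, so $\mathsf{PP}=\mathsf{PostBPP}$; conclude $\mathsf{PostBQP}=\mathsf{PP}=\mathsf{PostBPP}$ by Aaronson's theorem \cite{aaronson2005postBQP}; and then apply Theorem~\ref{thm: postbqp postbpp}, whose proof performs precisely the conditional Toda-style lifting from this parallel-flavored hypothesis to the full hypothesis $\mathsf{P^{\#P}}=\mathsf{BPP^{NP}_{\parallel}}$ required by Theorem~\ref{thm: main}.
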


This result becomes easy to show after Theorem \ref{thm: postbqp postbpp}, so we will prove it at the end of the next section.

\section{PostBQP and PostBPP}\label{sect: postBQP v postBPP}

An immediate consequence of the last section's main theorem is to the question of $\mathsf{PostBQP\stackrel{?}{=} PostBPP}$, i.e. is a poly-time quantum computer with postselection equal to a classical computer with postselection? The best existing answer was given when Aaronson showed $\mathsf{PostBQP=PP}$ \cite{aaronson2014equivalence}, thus equality would imply $\mathsf{PH=BPP^{NP}}$ (a collapse to the third level). However, as $\mathsf{PostBPP}\subseteq \mathsf{BPP^{NP}_{\parallel}}$ \cite{o2018weakness, han1997threshold} our theorem can be used to improve this collapse to the second level. Before we provide proof of this we formally define $\mathsf{PostBQP}$ and the operator $\mathsf{BP\cdot}$ which makes a complexity class probabilistic.

\begin{definition}
$\mathsf{PostBQP}$ is the set of languages for which there exists a uniform family of polynomial width and polynomial depth quantum circuits $\{C_n\}_ {n\geq 1}$ such that for any input x,
\begin{enumerate}[label=(\roman*)]
    \item There is a non-zero probability of measuring the first qubit of $C_n \ket{0\ldots 0}\ket{x}$ to be $\ket{1}$.
    \item If $x\in L$, conditioned on the first qubit being $\ket{1}$, the second qubit is $\ket{1}$ with probability at least 2/3.
    \item If $x\notin L$, conditioned on the first qubit being $\ket{1}$, the second qubit is $\ket{1}$ with probability at most 1/3.
\end{enumerate}
\end{definition}

$\mathsf{PostBPP}$ can be defined similarly with classical circuits and additional input of random bits.

\begin{definition}
    Let K be a complexity class. A language $L$, is in $\mathsf{BP\cdot K}$ if there exists a language $\mathsf{A \in K}$ and a polynomial $p$ such that for all strings $x$.
$$Pr(r \in \{0,1\}^{p(|x|)}: x \in L \iff x\#r \in A) \geq 2/3$$
\end{definition}

This covers the necessary definitions and we can proceed with the collapse result of interest.

\begin{theorem}\label{thm: postbqp postbpp}
    If $\mathsf{PostBQP=PostBPP}$ then $\mathsf{P^{\#P}=ZPP^{NP}}$ and the polynomial hierarchy collapses to the second level.
\end{theorem}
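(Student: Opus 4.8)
The plan is to reduce everything to the Main Theorem (Theorem \ref{thm: main}): it suffices to show that the hypothesis $\mathsf{PostBQP = PostBPP}$ forces $\mathsf{P^{\#P} = BPP^{NP}_{\parallel}}$, after which Theorem \ref{thm: main} immediately yields $\mathsf{P^{\#P} = ZPP^{NP}}$, and since $\mathsf{ZPP^{NP}}$ sits in the second level of the polynomial hierarchy this gives the claimed collapse. The starting data are two external facts: Aaronson's $\mathsf{PostBQP = PP}$ and the inclusion $\mathsf{PostBPP \subseteq BPP^{NP}_{\parallel}}$. Chaining them with the hypothesis gives $\mathsf{PP = PostBQP = PostBPP \subseteq BPP^{NP}_{\parallel}}$, so the real work is to bootstrap the single-level statement $\mathsf{PP \subseteq BPP^{NP}_{\parallel}}$ up to the equality $\mathsf{P^{\#P} = BPP^{NP}_{\parallel}}$.

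One of the two inclusions is routine. Since $\mathsf{BPP^{NP}_{\parallel} \subseteq PH}$ and Toda gives $\mathsf{PH \subseteq P^{\#P}}$, we get $\mathsf{BPP^{NP}_{\parallel} \subseteq P^{\#P}}$ for free. Moreover $\mathsf{PP \subseteq BPP^{NP}_{\parallel} \subseteq PH}$ already collapses the hierarchy: from $\mathsf{PP \subseteq PH}$ and $\mathsf{P^{PH} = PH}$ we obtain $\mathsf{P^{\#P} = P^{PP} \subseteq P^{PH} = PH}$, and combined with $\mathsf{PH \subseteq P^{\#P}}$ this yields $\mathsf{P^{\#P} = PH}$. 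Using Toda's sharper statement $\mathsf{PH \subseteq P^{\#P[1]}}$ we even get $\mathsf{P^{\#P} = P^{\#P[1]}}$, i.e. a single counting query suffices. This mirrors exactly the opening move in the proof of Theorem \ref{thm: BPP is rr to NP}.

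The reverse inclusion $\mathsf{P^{\#P} \subseteq BPP^{NP}_{\parallel}}$ is where I expect the difficulty to concentrate, and it is the crux that separates this second-level result from the older third-level one. With adaptive queries there is nothing to prove: replacing each $\mathsf{PP}$ query in a $\mathsf{P^{PP}}$ computation by its $\mathsf{BPP^{NP}}$ simulation and using that $\mathsf{BPP^{NP}}$ is closed under Turing reductions gives $\mathsf{P^{\#P} = P^{PP} \subseteq BPP^{NP}}$, which is precisely the third-level collapse. The parallel class $\mathsf{BPP^{NP}_{\parallel}}$ is not closed under adaptive reductions, so this shortcut is unavailable and the whole improvement rests on avoiding adaptivity. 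My plan is to exploit $\mathsf{P^{\#P} = P^{\#P[1]}}$: a language in $\mathsf{P^{\#P[1]}}$ is decided by a polynomial-time predicate applied to the value $N$ returned by one $\#P$ oracle call, so it is enough to recover a sufficient description of $N$ within a single non-adaptive round of $\mathsf{BPP^{NP}_{\parallel}}$. The idea is to read off the bits of $N$ in parallel, each bit being a language that lives in the now-collapsed $\mathsf{P^{\#P} = PH}$, and to place those bit-languages into $\mathsf{BPP^{NP}_{\parallel}}$ by combining the exact threshold information supplied by the hypothesis $\mathsf{PP \subseteq BPP^{NP}_{\parallel}}$ with the approximate counting already native to $\mathsf{BPP^{NP}_{\parallel}}$.

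The main obstacle, then, is exactly this value-extraction step in a single parallel round: a naive threshold-based binary search for $N$ is inherently adaptive, and individual bits of a $\#P$ count are modular rather than threshold predicates, so it is not immediate that $\mathsf{PP}$-comparisons alone pin down $N$ non-adaptively. I expect the delicate point to be showing that, under the collapse, the requisite predicate of the oracle answer can be decided by one parallel batch of $\mathsf{PP}$- and $\mathsf{NP}$-type queries, with amplification so that the polynomially many sub-queries all fit a single round and the error union-bounds away. Once $\mathsf{P^{\#P} = BPP^{NP}_{\parallel}}$ is in hand, the conclusion is immediate from Theorem \ref{thm: main}, and the same bookkeeping that produces $\mathsf{ZPP^{NP}}$ there certifies the second-level collapse here.
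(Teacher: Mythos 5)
Your frame is the same as the paper's: reduce to Theorem \ref{thm: main} by establishing $\mathsf{P^{\#P}=BPP^{NP}_{\parallel}}$, and your opening moves (chaining $\mathsf{PP=PostBQP=PostBPP\subseteq BPP^{NP}_{\parallel}}$, the easy inclusion $\mathsf{BPP^{NP}_{\parallel}\subseteq PH\subseteq P^{\#P}}$, and the collapse $\mathsf{P^{\#P}=PH}$) match the paper exactly. But the crux you correctly identify --- lifting $\mathsf{PP\subseteq BPP^{NP}_{\parallel}}$ to $\mathsf{P^{\#P}\subseteq BPP^{NP}_{\parallel}}$ without adaptivity --- is left unproven, and the sketch you offer for it does not close the gap. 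Extracting the value $N$ of the single $\mathsf{\#P}$ query bit-by-bit is circular as stated: each bit-language ``bit $i$ of $N(x)$ equals $1$'' is itself a language in $\mathsf{P^{\#P[1]}}=\mathsf{P^{\#P}}=\mathsf{PH}$, so ``placing those bit-languages into $\mathsf{BPP^{NP}_{\parallel}}$'' is precisely the inclusion you are trying to prove, and no non-circular mechanism is supplied. Your fallback ingredients do not rescue this: threshold ($\mathsf{PP}$-type) comparisons pin down $N$ only via adaptive binary search, individual bits of a $\mathsf{\#P}$ count are modular rather than threshold predicates (as you yourself note), and approximate counting recovers only the leading bits of $N$, not the low-order ones a $\mathsf{P^{\#P[1]}}$ predicate may depend on. So the proposal stalls exactly where you predicted the difficulty would concentrate.

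The missing idea in the paper is Toda's second theorem, $\mathsf{PP^{PH}\subseteq BP\cdot PP}$, which bypasses value-extraction entirely. Since $\mathsf{PH\subseteq PP^{PH}}$, the hypothesis-driven collapse $\mathsf{P^{\#P}=PH}$ gives $\mathsf{P^{\#P}\subseteq BP\cdot PP}$; then Lemma \ref{lma: BP PP is in parallel too} shows that $\mathsf{PP\subseteq BPP^{NP}_{\parallel}}$ implies $\mathsf{BP\cdot PP\subseteq BPP^{NP}_{\parallel}}$, because the outer $\mathsf{BP}$ randomness merges with the $\mathsf{BPP}$ machine's own randomness (after standard amplification on both sides) without introducing any additional round of $\mathsf{NP}$ queries --- this is where non-adaptivity is preserved. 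Combining, $\mathsf{P^{\#P}\subseteq BPP^{NP}_{\parallel}}$, and with the reverse inclusion one gets $\mathsf{P^{\#P}=BPP^{NP}_{\parallel}}$ and Theorem \ref{thm: main} applies. In short: your skeleton is right, but the load-bearing step needs Toda's $\mathsf{BP\cdot PP}$ lemma (or some equivalent), not a parallel reconstruction of the oracle answer, which there is no evident way to carry out.
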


\begin{proof}
Assume $\mathsf{PostBQP=PostBPP}$. By Toda $\mathsf{PH\subseteq P^{\#P}}$, by Aaronson \cite{aaronson2005postBQP} $\mathsf{PostBQP=PP}$ and by $\mathsf{PostBPP \subseteq BPP^{NP}_{\parallel}}$ \cite{o2018weakness, han1997threshold}. This gives: 
\[
\mathsf{
PP \subseteq BPP^{NP}_{\parallel} \text{  and  }  P^{\#P} = P^{PP} \subseteq P^{BPP^{NP}} = BPP^{NP} = PH \subseteq P^{\#P}.
}
\]

Therefore $\mathsf{P^{\#P}=PH}$.

At this point we recall an interesting lemma from Toda \cite{toda1991pp} to continue.
\[\mathsf{
PP^{PH}\subseteq BP\cdot PP
}\]

Clearly $\mathsf{PH\subseteq PP^{PH}}$ and it can be shown that if $\mathsf{PP\subseteq BPP^{NP}_{\parallel}}$ then $\mathsf{BP \cdot PP \subseteq BPP^{NP}_{\parallel}}$ (a full proof follows in lemma \ref{lma: BP PP is in parallel too}), therefore

\[\mathsf{
PH\subseteq BP\cdot PP \subseteq BPP^{NP}_{\parallel}
}.\]

Since $\mathsf{PH = P^{\#P}}$ and $\mathsf{BPP^{NP}_{\parallel}\subseteq PH}$ we can conclude $\mathsf{P^{\#P}=BPP^{NP}_{\parallel}}$. We now have the condition $\mathsf{P^{\#P}= BPP^{NP}_{\parallel}}$ so by theorem \ref{thm: main} we get $\mathsf{P^{\#P} = ZPP^{NP}}$, the final result.
\end{proof}

\begin{lemma}\label{lma: BP PP is in parallel too}
    If $\mathsf{PP}$ is in $\mathsf{BPP^{NP}_{\parallel}}$ then so is $\mathsf{BP \cdot PP}$
\end{lemma}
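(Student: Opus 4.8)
The plan is to take an arbitrary $L\in\mathsf{BP\cdot PP}$ and directly build a $\mathsf{BPP^{NP}_{\parallel}}$ algorithm for it, leaning on the hypothesis $\mathsf{PP}\subseteq\mathsf{BPP^{NP}_{\parallel}}$ to evaluate the inner $\mathsf{PP}$ language while being careful to preserve the single round of parallel queries. First I would unpack the definition: since $L\in\mathsf{BP\cdot PP}$ there is a language $A\in\mathsf{PP}$ and a polynomial $p$ with $\Pr_{r\in\{0,1\}^{p(|x|)}}(x\in L \iff x\#r\in A)\geq 2/3$. By the hypothesis $A\in\mathsf{BPP^{NP}_{\parallel}}$, and standard amplification for $\mathsf{BPP^{NP}_{\parallel}}$ — running $t$ independent copies in parallel, submitting all of their $\mathsf{SAT}$ queries in one batch, and taking a majority — yields an algorithm $M$ deciding $A$ with error at most $2^{-q(|y|)}$ for any desired polynomial $q$, while still using only a single non-adaptive round of $\mathsf{SAT}$ queries.

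Next I would construct the algorithm for $L$. On input $x$, sample $m$ independent strings $r_1,\dots,r_m$ for a polynomial $m$ to be fixed, run $M(x\#r_1),\dots,M(x\#r_m)$, and output the majority vote. The essential observation for membership in $\mathsf{BPP^{NP}_{\parallel}}$ is that every $\mathsf{SAT}$ query produced by every one of these $m$ runs of $M$ depends only on $x$, the $r_i$, and $M$'s internal coins, all of which are drawn up front, and on \emph{no} oracle answer. Hence all of these polynomially many queries can be computed first and submitted as one parallel batch, with the sampling, the copy-wise amplification, and the final majority all carried out in randomized postprocessing. The composed algorithm therefore lives in $\mathsf{BPP^{NP}_{\parallel}}$ by construction.

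Finally I would bound the error, which has two sources: the BP error (some $r_i$ are ``bad'', with $[x\#r_i\in A]\neq L(x)$) and the internal error of $M$. Choosing $q$ large, a union bound makes all $m$ runs of $M$ simultaneously correct except with probability $m2^{-q}$; conditioned on that event, each indicator $[x\#r_i\in A]$ equals $L(x)$ independently with probability at least $2/3$, so a Chernoff bound makes the majority correct except with probability $2^{-\Omega(m)}$. Taking, say, $m=n$ and $q=2n$ drives the total failure probability comfortably below $1/3$ (indeed below $2^{-\Omega(n)}$), which establishes $L\in\mathsf{BPP^{NP}_{\parallel}}$ and hence $\mathsf{BP\cdot PP}\subseteq\mathsf{BPP^{NP}_{\parallel}}$.

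I expect the main obstacle to be not the probabilistic estimates but the bookkeeping that guarantees the single-round, non-adaptive structure survives the composition: one must verify that neither the outer BP sampling, nor the amplification of $M$, nor the aggregation of the $m$ runs smuggles in adaptivity. This is exactly where the hypothesis must place $\mathsf{PP}$ inside the \emph{parallel} class $\mathsf{BPP^{NP}_{\parallel}}$ rather than the adaptive $\mathsf{BPP^{NP}}$; because all query generation happens before any query is answered, the queries of the inner and outer stages collapse into one parallel round, and the argument goes through.
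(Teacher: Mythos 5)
Your proof is correct and follows essentially the same route as the paper's: replace the inner $\mathsf{PP}$ language by its $\mathsf{BPP^{NP}_{\parallel}}$ algorithm, amplify both the outer $\mathsf{BP}$ sampling and the inner algorithm, combine the coin flips, and observe that all queries can be generated up front so the single parallel round survives. You merely spell out the amplification and error accounting (union bound plus Chernoff) in more detail than the paper does.
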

\begin{proof}
Fix some $L\in \mathsf{BP\cdot PP}$. By the definition, there exists $A\in\mathsf{PP}$ which defines $L$.
Assume $\mathsf{PP}$ is in $\mathsf{BPP^{NP}_{\parallel}}$, therefore $A$ is in $\mathsf{BPP^{NP}_{\parallel}}$. Therefore $L$ is in $\mathsf{BP\cdot BPP^{NP}_{\parallel}}$.

Note that majority-vote amplification can be used on $\mathsf{BP\cdot PP}$ and $\mathsf{BPP^{NP}_{\parallel}}$ to decrease the probability of failure to less than $1/6$.

If we combine the random coin flips of both the $\mathsf{BP \cdot PP}$ and $\mathsf{BPP}$ parts of the algorithm we get a single failure probability below 1/3, which fits the definition of $\mathsf{BPP^{NP}_{\parallel}}$.
\end{proof}

Theorem \ref{thm: postbqp postbpp} provides a quick a simple proof of Theorem \ref{thm: natural} (the natural interpretation of our result).

\begin{proof}[Proof of Theorem \ref{thm: natural}]
By definition $\mathsf{\#P=ExactCount}$. O'Donnell and Say show $\mathsf{P^{ApproxCount}_{\parallel}=PostBPP}$ \cite{o2018weakness}. 
Therefore we can rewrite the antecedent of the Theorem as $\mathsf{P^{\#P}_\parallel=PostBPP}$.

Any problem in $\mathsf{PP}$ can be solved with one $\mathsf{\#P}$ query, which can obviously be done in one parallel round of oracle calls, so $\mathsf{PP\subseteq P^{\#P}_\parallel}$. This gives $\mathsf{PP=PostBPP}$ (as we already know $\mathsf{PostBPP\subseteq PP}$ and by Theorem \ref{thm: postbqp postbpp} $\mathsf{P^{\#P}=ZPP^{NP}}$.
\end{proof}


\section{Improved Hardness of $\mathsf{BosonSampling}$, $\mathsf{IQP}$, and $\mathsf{DQC1}$}\label{sect: quantum supremacy}

In the original work on the hardness of Boson Sampling \cite{aaronson2011computational} two cases were considered: \textit{the exact case}, where the probability that the algorithm would sample a given element must be at least multiplicatively close to the target distribution, and \textit{the approximate case}, which allowed additive error in the total variation distance between the sampled and target distribution. 
For the exact case, classical simulation implied the polynomial hierarchy collapsed to the third level. For the approximate case, a collapse to the 3rd level was achieved, but subject to additional assumptions (we refer to these as additional assumptions henceforth). This separation between multiplicative and additive error also applies to the work on instantaneous quantum polynomial-time sampling ($\mathsf{IQP}$) \cite{bremner2016average} and sampling from 1 clean qubit circuits ($\mathsf{DQC1}$) \cite{morimae2017hardness}, albeit with different additional assumptions.
It is important to notice that realistic quantum computers are believed not to be able to achieve multiplicative error in sampling, but can achieve additive error \cite{bremner2016average, aharonov2023polynomial}. So it is the approximate case that distinguishes quantum from classical computation.

Fujii \cite{fujii2015power} et al strengthened the collapse for the exact case to $\mathsf{PH = AM \cap coAM}$ (a collapse to the second level of the polynomial hierarchy) for $\mathsf{BosonSampling}$, $\mathsf{DQC1}$ and $\mathsf{IQP}$ (amongst others). However, for fundamental reasons, the proof did not extend to the approximate case.

In this section, we show that efficient classical sampling of the physically relevant \textit{approximate case} would also collapse the polynomial hierarchy to its second level ($\mathsf{ZPP^{NP}}$), conditional on the existing additional assumptions. In this sense, our work provides the strongest known separation between classical and quantum computations for sampling and functional problems. While our result can also be applied to the exact case it is an not improvement on Fujii's result, as it is known $\mathsf{AM\cap coAM \subseteq ZPP^{NP}}$, and this application is therefore not relevant.

To apply Theorem \ref{thm: main} and show a second level $\mathsf{PH}$ collapse we require the condition $\mathsf{P^{\#P}=BPP^{NP}_{\parallel}}$, however, the proofs contained in \cite{aaronson2011computational, bremner2016average, morimae2017hardness} only show $\mathsf{P^{\#P}=BPP^{NP}}$. 
Fortunately, each of the proofs can be easily modified to only use parallel oracle calls, giving $\mathsf{P^{\#P}=BPP^{NP}_{\parallel}}$. 
Formally proving this fact would require reproducing each proof in detail and would make this paper excessively long.
Instead of completely rewriting these proofs we will instead notice that each proof only uses the $\mathsf{NP}$ oracle to approximately count some post-selected quantity with Stockmeyers algorithm \cite{stockmeyer1985approximation}. 
In the next lemma, we show that this computation can be done with parallel oracle calls. This will allow us to argue classical sampling of $\mathsf{BosonSampling}$, $\mathsf{IQP}$ or $\mathsf{DQC1}$ would imply $\mathsf{P^{\#P}=BPP^{NP}_{\parallel}}$ without formally reproducing each of the proofs.

\begin{lemma}[Counting a postselected quantity]\label{lma: postselected stockmeyer}
    Given a Boolean function $f:\{0,1\}^n\rightarrow \{0,1\}$ and a post selection criteria $h:\{0,1\}^n\rightarrow \{0,1\}$ let
    
    \[
    p=\Pr_{x\in \{0,1\}^n}[f(x)=1| h(x)=1]
    =\frac{\sum_{x\in \{ 0 ,1 \}^n}f(x)h(x)}{\sum_{x\in \{ 0 ,1 \}^n}h(x)}
    .
    \]
    
    Then for all $g\geq 1+ \frac{1}{poly(n)}$, there exists an $\mathsf{FBPP_{\parallel}^{NP^{f,h}}}$ machine that approximates $p$ to within a multiplicative factor of $g.$
\end{lemma}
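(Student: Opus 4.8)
The plan is to reduce the conditional probability to a ratio of two counting quantities and to approximate each of them, to a $(1+1/\mathrm{poly}(n))$ multiplicative factor, using a parallel (non-adaptive) version of Stockmeyer's hashing-based approximate counting. Write $N=\sum_x f(x)h(x)$ and $D=\sum_x h(x)$, so that $p=N/D$, and note that $N$ and $D$ are the cardinalities of the sets $S_N=\{x:f(x)h(x)=1\}$ and $S_D=\{x:h(x)=1\}$, whose membership is decidable in polynomial time given oracle access to $f$ and $h$. If $\hat N$ and $\hat D$ approximate $N$ and $D$ to within a common multiplicative factor $g'$ with $(g')^2\le g$, then $\hat N/\hat D$ lies in $[p/(g')^2,(g')^2 p]$ and hence approximates $p$ to within $g$; since $g\ge 1+1/\mathrm{poly}(n)$ we may take $g'=1+1/\mathrm{poly}(n)$. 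So it suffices to exhibit a parallel-query approximate counter for an arbitrary set with an $\mathsf{NP}^{f,h}$ membership predicate, and to run it on $S_N$ and $S_D$.

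Second, I would construct the counter for a set $S\subseteq\{0,1\}^n$ whose membership is $\mathsf{NP}^{f,h}$. Following Stockmeyer, fix a pairwise-independent hash family $\mathcal H_k$ mapping $\{0,1\}^n\to\{0,1\}^k$. For each scale $k\in\{0,\dots,n\}$ and each of $T=\mathrm{poly}(n)$ independently sampled hashes $H_{k,j}$, form the query ``$\exists x:\ x\in S\ \wedge\ H_{k,j}(x)=0^k$,'' which is a genuine $\mathsf{NP}^{f,h}$ question (existential over $x$, with $f,h$ evaluated via the oracle). The decisive point for the $\parallel$ requirement is that the $\mathsf{BPP}$ machine draws \emph{all} the hash functions in advance from its own random tape, so no query depends on the answer to any other: the entire collection of $(n+1)T$ queries forms a single non-adaptive round. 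From the empirical acceptance fractions $\hat q_k$ one locates the crossover scale; a standard second-moment analysis of pairwise-independent hashing shows $\log_2|S|$ is pinned down to within a constant additive error, giving $|S|$ to within a fixed constant factor $c$, with failure probability driven below $2^{-\mathrm{poly}(n)}$ by choosing $T$ large.

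Third, I would amplify the constant-factor estimate to a $(1+1/\mathrm{poly}(n))$ one by the tensor-power trick: apply the counter to $S^{\times t}\subseteq\{0,1\}^{nt}$, whose membership (``each block lies in $S$'') remains an $\mathsf{NP}^{f,h}$ predicate and whose queries are again formed from pre-sampled randomness. A factor-$c$ approximation of $|S^{\times t}|=|S|^t$ yields $|S|$ to within $c^{1/t}$, and taking $t=\mathrm{poly}(n)$ large enough makes $c^{1/t}\le g'$. All queries---for $S_N$, for $S_D$, and their tensor versions---are pre-determined by the random tape, so they combine into a single parallel round; the post-processing (locating the crossover, taking $t$-th roots, and dividing $\hat N$ by $\hat D$) happens in the deterministic polynomial-time stage afterwards, keeping the whole procedure in $\mathsf{FBPP}^{\mathsf{NP}^{f,h}}_{\parallel}$.

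The step I expect to be the main obstacle is precisely the removal of adaptivity: textbook presentations of Stockmeyer's algorithm run an adaptive binary search over $k$, which would cost several rounds. The resolution is that adaptivity is unnecessary, because the threshold tests at different scales $k$ are statistically independent given the pre-sampled hashes and can all be issued at once, with the crossover extracted only after the answers return. A minor secondary point to dispatch is the degenerate regime: the estimate is meaningful only when $D\ge 1$, i.e. when the conditioning event is nonempty, which is exactly the situation in which $p$ is defined. Finally, the two counters' failure probabilities together with the $(g')^2\le g$ slack must be budgeted, which a union bound over the polynomially many queries and the two counts secures, so that the combined machine meets the $\mathsf{FBPP}$ error bound.
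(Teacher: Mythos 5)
Your proposal is correct and follows essentially the same route as the paper: write $p$ as the ratio $N/D$ of two counts, approximate each to a multiplicative factor $g'$ with $(g')^2\le g$ using a non-adaptive (parallel) version of Stockmeyer counting, issue all queries in one round, and divide in post-processing. The only difference is that you additionally spell out the parallel approximate counter itself (pre-sampled pairwise-independent hashes at all scales plus the tensor-power amplification), whereas the paper isolates this as a separate lemma and justifies it by citing Valiant--Vazirani and O'Donnell--Say rather than proving it in detail.
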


The proof of lemma \ref{lma: postselected stockmeyer} is quite trivial once it is realised that Stockmeyer's approximate counting theorem can be done with only parallel oracle calls, which we capture in the next lemma.

\begin{lemma}[Approximate counting in parallel]\label{lma:better stockmeyer}
    Given a Boolean function $f:\{0,1\}^n\rightarrow \{0,1\}$, let
    \[
    p=\Pr_{x\in \{0,1\}^n}[f(x)=1]=\frac{1}{2^n} \sum_{x\in \{ 0 ,1 \}^n}f(x).
    \]
    Then for all $g\geq 1+ \frac{1}{poly(n)}$, there exists an $\mathsf{FBPP_{\parallel}^{NP^f}}$ machine that approximates p to within a multiplicative factor of $g.$
\end{lemma}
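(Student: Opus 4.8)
The plan is to recall Stockmeyer's approximate counting algorithm \cite{stockmeyer1985approximation} and then observe that its only apparent source of adaptivity can be eliminated, leaving a single round of non-adaptive $\mathsf{NP}^f$ queries. Stockmeyer approximates $N := |\{x : f(x)=1\}|$ by the method of pairwise-independent hashing: for a scale $k$, one draws a random pairwise-independent hash $h_k:\{0,1\}^n \to \{0,1\}^k$ and poses the $\mathsf{NP}^f$ query ``do at least $c$ distinct strings $x$ satisfy $f(x)=1$ and $h_k(x)=0^k$?''. Membership $f(x)=1$ is checkable given the oracle $f$, so this is a genuine $\mathsf{NP}^f$ question. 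By a second-moment (Chebyshev) argument the yes/no answer reliably locates $N$ relative to the threshold $2^k$, and sweeping $k$ across its range reveals $N$ up to a constant multiplicative factor.

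To drive the constant factor down to the required $g \ge 1 + 1/\mathrm{poly}(n)$, first I would apply the standard product trick: replace $f$ by the function $F$ on $\{0,1\}^{nt}$ defined by $F(x_1,\dots,x_t) = \bigwedge_{i=1}^{t} f(x_i)$, whose satisfying set has size $N^t$, so that a factor-$2$ estimate of $N^t$ yields a factor-$2^{1/t}$ estimate of $N$; choosing $t = \mathrm{poly}(n)$ makes $2^{1/t} \le g$. The scale $k$ then ranges over $\{0,1,\dots,nt\}$, still only polynomially many values.

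The key observation, and the whole point of this lemma, is that the usual presentation permits a binary search over $k$, but this adaptivity is unnecessary: since there are only $nt+1 = \mathrm{poly}(n)$ scales, the machine can draw all the hash functions $h_0,\dots,h_{nt}$ (together with the $O(\mathrm{poly}(n))$ independent repetitions per scale needed for amplification) from its random string in advance, pose every corresponding $\mathsf{NP}^f$ query in one parallel round, and then, as a purely polynomial-time post-processing step, scan the resulting bit-vector of answers for the transition scale and read off the estimate. Because no query depends on the answer to any other, the whole computation fits the definition of $\mathsf{FBPP_{\parallel}^{NP^f}}$.

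The main obstacle is the bookkeeping of error probabilities: each scale's query can err, and polynomially many are issued simultaneously, so I would need the per-query failure probability (controlled by the number of independent hashes at each scale via Chebyshev together with a majority vote) to be small enough that a union bound over all $\mathrm{poly}(n)$ queries leaves the total failure below a constant, or below $2^{-n}$ if that confidence is desired. This is routine amplification, but it must be arranged so as not to reintroduce adaptivity; since the extra repetitions are themselves just more independent hash functions drawn up front, they too fold into the single parallel round.
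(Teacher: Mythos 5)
Your proposal is correct and follows essentially the same route as the paper, which sketches exactly this argument: hashing-based approximate counting (the paper cites Valiant--Vazirani rather than Stockmeyer's original presentation, but the technique is the same) with the key observation that the random hash functions for all polynomially many scales can be fixed up front, so every $\mathsf{NP}^f$ query can be issued in a single parallel round and the transition scale found in post-processing. Your write-up is in fact more detailed than the paper's, which omits the product-trick amplification to factor $g$ and the union-bound bookkeeping that you spell out.
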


We will not provide all the steps of this lemma as it is a clear corollary of the version of the proof given by Valiant and Vazirani \cite{valiant1985np}\footnote{To see this, note that they can fix their random vectors at the start of their algorithm and check in parallel if 1, 2, up to $n$ vectors are sufficient to empty the set.}. This theorem is also a consequence of the results of O'Donnel and Say \cite{o2018weakness}, who show that approximate counting is in $\mathsf{PostBPP}$, and therefore in $\mathsf{BPP^{NP}_{\parallel}}$.

We then proceed with the proof of lemma \ref{lma: postselected stockmeyer}.

\begin{proof}[Proof of lemma \ref{lma: postselected stockmeyer}]
    Fix a target error, $g$, from the assumption of the theorem there exists $d$ such that $g>1+\frac{1}{n^d}$. By lemma \ref{lma:better stockmeyer} we can approximate $\sum_{x\in \{ 0 ,1 \}^n}f(x)h(x)$ to a multiplicative factor of ${g'}=1+\frac{1}{3n^d}$ with high probability. Similarly we can approximate $\sum_{x\in \{ 0 ,1 \}^n}h(x)$ to ${g'}$ as well. Dividing the first sum by the second gives us $p$ within a multiplicative factor of 
    \[
    {g'}^2 = 1+\frac{2}{3n^d}+\frac{1}{9n^{2d}} \leq 
    1+\frac{2}{3n^d}+\frac{1}{9n^{d}} = 1+\frac{7}{9}\frac{1}{n^d} < 1+\frac{1}{n^d}
    \]
with sufficiently high probability. Since each of these sums can be done in parallel and then divided for the final answer we can perform them in parallel. Given we only need to decrease the failure probability by a factor of 2 the final algorithm is in $\mathsf{FBPP^{NP}_{\parallel}}^f$.
\end{proof}

We will first apply the above logic for $\mathsf{BosonSampling}$ results.
Aaronson and Arkhipov use one post-selection step in lemma 42 in \cite{aaronson2011computational}, they then perform one approximate counting step on a quantity derived from this post-selection step to prove their main theorem. The structure of this proof fits the format of lemma \ref{lma: postselected stockmeyer}, therefore we can state a parallelised version of their main theorem.

\begin{theorem}[Aaronson and Arkhipov with parallel calls]
    Let $\mathcal{D}_A$ be the probability distribution sampled by a boson computer $A$. Suppose there exists a classical algorithm C that takes as input a description of A as well as an error bound $\varepsilon$, and that samples from a probability distribution $\mathcal{D}'_A$ such that $\norm{\mathcal{D}'_A - \mathcal{D}_A}\leq \varepsilon$ in poly$(|A|, 1/\varepsilon)$ time. Then the $|GPE|^2_{\pm}$ problem is solvable in $\mathsf{BPP^{NP}_{\parallel}}$.
\end{theorem}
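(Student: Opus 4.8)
The plan is to follow Aaronson and Arkhipov's argument essentially verbatim, isolating the single place where their $\mathsf{BPP^{NP}}$ algorithm consults the oracle and replacing it with the parallel procedure of Lemma~\ref{lma: postselected stockmeyer}. Recall the shape of their reduction: to solve an instance $X$ of $|GPE|^2_{\pm}$ --- that is, to estimate $|\mathrm{Per}(X)|^2$ for a Gaussian random $X$ to within additive error --- one first invokes the hiding property (their Lemma~42, which embeds $X$ as a random submatrix of a Haar-random interferometer and uses one postselection step) to reduce the task to estimating the probability that the boson computer produces a particular output configuration. Since the hypothesised classical sampler $C$ produces a distribution $\mathcal{D}'_A$ with $\norm{\mathcal{D}'_A - \mathcal{D}_A}\leq\varepsilon$, a Markov argument over outputs shows that estimating this outcome probability under $\mathcal{D}'_A$ suffices to solve $|GPE|^2_{\pm}$ with high probability. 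All of this --- the embedding, the call to $C$, and the Markov/hiding bookkeeping --- is a polynomial-time randomised reduction and contributes only to the $\mathsf{BPP}$ part of the final machine.

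First I would make explicit that the only appeal to the oracle in the entire argument is a single approximate counting of a postselected quantity. Concretely, the outcome probability to be estimated is exactly a conditional probability $p=\Pr_{x}[f(x)=1\mid h(x)=1]$, where $x$ ranges over the random bits fed to $C$ together with the embedding randomness, $f(x)$ indicates that $C$ outputs the target configuration, and $h(x)$ encodes the postselection event of Lemma~42. Because $C$ runs in polynomial time and the embedding and postselection predicates are polynomial-time checkable, both $f$ and $h$ lie in $\mathsf{P}$; hence the oracle $\mathsf{NP}^{f,h}$ appearing in Lemma~\ref{lma: postselected stockmeyer} is simply $\mathsf{NP}$.

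Next I would invoke Lemma~\ref{lma: postselected stockmeyer} directly: it approximates $p$ to within any multiplicative factor $g\geq 1+1/\mathrm{poly}(n)$ using an $\mathsf{FBPP^{NP}_{\parallel}}$ machine, computing the numerator count $\sum_{x}f(x)h(x)$ and the denominator count $\sum_{x}h(x)$ within one parallel round and then dividing. This is exactly the multiplicative guarantee that Stockmeyer's step provided in the original proof, so the downstream analysis of $|GPE|^2_{\pm}$ is left untouched. Composing the polynomial-time randomised reduction with this single round of parallel $\mathsf{NP}$ queries places the whole computation in $\mathsf{BPP^{NP}_{\parallel}}$, as claimed.

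The main obstacle is not any new estimate but the bookkeeping of verifying that Aaronson and Arkhipov's proof genuinely makes only this one oracle call and conceals no further adaptivity --- in particular, that Stockmeyer's approximate counting, which is nominally adaptive, can be run nonadaptively. That adaptivity is precisely what Lemma~\ref{lma:better stockmeyer} removes (following Valiant--Vazirani, or equivalently via the containment of approximate counting in $\mathsf{PostBPP}\subseteq\mathsf{BPP^{NP}_{\parallel}}$), while the need to estimate the numerator and denominator in the same round is handled inside Lemma~\ref{lma: postselected stockmeyer}. Once these two points are granted, no further calculation is required.
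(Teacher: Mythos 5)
Your proposal is correct and takes essentially the same route as the paper: the paper likewise observes that Aaronson--Arkhipov's reduction consults the oracle only once, to approximately count a postselected quantity arising from their Lemma~42, and substitutes the parallel procedure of Lemma~\ref{lma: postselected stockmeyer} (itself resting on the nonadaptive Stockmeyer counting of Lemma~\ref{lma:better stockmeyer}), leaving the polynomial-time randomised part of the reduction untouched. Your write-up merely makes explicit the bookkeeping --- identifying $f$ and $h$ as polynomial-time predicates and confirming no hidden adaptivity --- that the paper's terser presentation leaves implicit.
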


The $|GPE|^2_{\pm}$ problem (defined in \cite{aaronson2011computational}) becomes $\mathsf{\# P}$ complete given two conjectures: The \textbf{permanent-of-Gaussians Conjecture} (PGC) and the \textbf{Permanent Anti-Concentration Conjecture} (PACC). These conjectures capture the belief that the permanent of Gaussian matrices does not concentrate close to zero and is hard to estimate, further justification of these conjectures is available in the original paper \cite{aaronson2011computational}. These are the \textit{additional assumptions} we referred to earlier.

\begin{theorem}
Assume PACC and PGC hold. If there exists a classical algorithm that can sample the distribution of a boson computer with additive error, then $\mathsf{P^{\# P} = ZPP^{NP}}$ and the polynomial hierachy collapse to the second level.
\end{theorem}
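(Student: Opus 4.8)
The plan is to show that classical additive-error boson sampling, under PACC and PGC, forces the hypothesis of Theorem~\ref{thm: main}, namely $\mathsf{P^{\#P}=BPP^{NP}_{\parallel}}$; Theorem~\ref{thm: main} then immediately yields $\mathsf{P^{\#P}=ZPP^{NP}}$, a collapse to the second level. The argument runs almost entirely parallel to the proof of Theorem~\ref{thm: postbqp postbpp}. The only genuinely new ingredient is the source of the containment $\mathsf{PP\subseteq BPP^{NP}_{\parallel}}$: in the postselection case it came from $\mathsf{PostBQP=PP}$ together with $\mathsf{PostBPP\subseteq BPP^{NP}_{\parallel}}$, whereas here it will come from the boson sampling hypothesis combined with the two conjectures. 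Once $\mathsf{PP\subseteq BPP^{NP}_{\parallel}}$ is in hand, the remainder of the proof can be quoted verbatim from Theorem~\ref{thm: postbqp postbpp}.

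First I would establish $\mathsf{PP\subseteq BPP^{NP}_{\parallel}}$. Under PACC and PGC the problem $|GPE|^2_{\pm}$ is $\mathsf{\#P}$-complete, and the parallelised Aaronson--Arkhipov theorem stated above places $|GPE|^2_{\pm}$ in $\mathsf{BPP^{NP}_{\parallel}}$ whenever a classical additive-error sampler exists. Completeness means the exact permanent, a $\mathsf{\#P}$-complete function, reduces to $|GPE|^2_{\pm}$; the worst-case-to-average-case step is the permanent's random self-reduction, which fixes a line through the target matrix and evaluates at polynomially many \emph{predetermined} points, and is therefore non-adaptive. Composing this non-adaptive reduction with the one-round algorithm of the parallelised theorem lets all the constituent $\mathsf{NP}$ queries be batched into a single parallel round, so every $\mathsf{\#P}$ function lies in $\mathsf{FBPP^{NP}_{\parallel}}$. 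Since any $\mathsf{PP}$ language is decided by comparing one such count against a threshold in polynomial-time post-processing, we obtain $\mathsf{PP\subseteq BPP^{NP}_{\parallel}}$.

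With $\mathsf{PP\subseteq BPP^{NP}_{\parallel}}$ secured, I would reproduce the chain from Theorem~\ref{thm: postbqp postbpp}. First, $\mathsf{P^{\#P}=P^{PP}\subseteq P^{BPP^{NP}}=BPP^{NP}=PH\subseteq P^{\#P}}$, the last containment by Toda, so $\mathsf{P^{\#P}=PH}$. Next, Toda's lemma $\mathsf{PP^{PH}\subseteq BP\cdot PP}$ together with Lemma~\ref{lma: BP PP is in parallel too}, whose hypothesis $\mathsf{PP\subseteq BPP^{NP}_{\parallel}}$ we have just verified, gives $\mathsf{PH\subseteq PP^{PH}\subseteq BP\cdot PP\subseteq BPP^{NP}_{\parallel}}$. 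Combining this with $\mathsf{P^{\#P}=PH}$ and the trivial $\mathsf{BPP^{NP}_{\parallel}\subseteq PH}$ yields $\mathsf{P^{\#P}=BPP^{NP}_{\parallel}}$, and Theorem~\ref{thm: main} then delivers $\mathsf{P^{\#P}=ZPP^{NP}}$, whence the polynomial hierarchy collapses to its second level.

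The main obstacle I anticipate is keeping everything inside a \emph{single} round of parallel $\mathsf{NP}$ queries. The completeness reduction for $|GPE|^2_{\pm}$ must be exploited in a way that does not introduce adaptive rounds when composed with the parallel algorithm; the saving grace is that polynomial interpolation for the permanent chooses its evaluation points in advance, so the composition stays non-adaptive. The second delicate point is that what we obtain directly is the non-adaptive statement $\mathsf{PP\subseteq BPP^{NP}_{\parallel}}$, whereas Theorem~\ref{thm: main} demands the adaptive equality $\mathsf{P^{\#P}=BPP^{NP}_{\parallel}}$; this gap is exactly what Toda's $\mathsf{BP\cdot PP}$ detour and Lemma~\ref{lma: BP PP is in parallel too} close, so no difficulty arises beyond invoking them. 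Apart from these bookkeeping concerns, the result is essentially a corollary of Theorem~\ref{thm: postbqp postbpp} and Theorem~\ref{thm: main}.
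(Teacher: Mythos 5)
Your proposal is correct and takes essentially the same route the paper intends: the parallelised Aaronson--Arkhipov theorem plus the conjectured $\mathsf{\#P}$-hardness of $|GPE|^2_{\pm}$ under PACC and PGC gives $\mathsf{PP\subseteq BPP^{NP}_{\parallel}}$, the Toda/$\mathsf{BP\cdot PP}$ machinery of Theorem \ref{thm: postbqp postbpp} together with Lemma \ref{lma: BP PP is in parallel too} upgrades this to $\mathsf{P^{\#P}=BPP^{NP}_{\parallel}}$, and Theorem \ref{thm: main} then yields $\mathsf{P^{\#P}=ZPP^{NP}}$. If anything, you make explicit two points the paper glosses over, namely that the $\mathsf{BP\cdot PP}$ detour is what bridges the gap from the non-adaptive containment $\mathsf{PP\subseteq BPP^{NP}_{\parallel}}$ to the adaptive equality $\mathsf{P^{\#P}=BPP^{NP}_{\parallel}}$, and that the conjectured worst-case-to-average-case reduction must be non-adaptive so that all $\mathsf{NP}$ queries can be batched into a single parallel round.
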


Next, we discuss how the same strategy applies to the $\mathsf{IQP}$ case. The proof provided by Bremner, Montanaro and Shepherd uses just Stockmeyer counting to prove their collapse to $\mathsf{BPP^{NP}}$ in their theorem 1. 
By lemma \ref{lma:better stockmeyer} all $\mathsf{NP}$ calls in their algorithm can be done in parallel and we can convert their theorem 1 to a $\mathsf{BPP^{NP}_{\parallel}}$ result. For $\mathsf{IQP}$ we do not need to conjecture the concentration of some hard-problem. The result only rest on the conjectured average case hardness of approximately computing either the partition function of the Ising model \textit{or} on a property of low-degree polynomials over finite fields (Conjectures 2 and 3 in \cite{bremner2016average}).

\begin{theorem}[Improved IQP hardness \cite{bremner2016average}]
Assume either above conjecture is true. If it is possible to classically sample from the output probability distribution of any $\mathsf{IQP}$ circuit in polynomial time, up to an error of 1/192 in $l_1$ norm, then $\mathsf{P^{\#P}}=\mathsf{ZPP^{NP}}$. Hence the Polynomial Hierarchy would collapse to its second level.
\end{theorem}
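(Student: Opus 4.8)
The plan is to reuse, essentially verbatim, the recipe already applied to the $\mathsf{BosonSampling}$ case: first reconstruct the conditional $\mathsf{\#P}$-hardness of \cite{bremner2016average}, then observe that the sole use of the $\mathsf{NP}$ oracle in that argument is a Stockmeyer approximate-counting step which Lemma \ref{lma:better stockmeyer} lets us execute in a single nonadaptive round, and finally feed the resulting equality into Theorem \ref{thm: main}. Accordingly I would begin by recalling the structure of Theorem 1 of \cite{bremner2016average}. Under either the Ising partition-function conjecture or the low-degree-polynomial conjecture (Conjectures 2 and 3 of \cite{bremner2016average}), a hypothetical classical algorithm that samples the output distribution of an arbitrary $\mathsf{IQP}$ circuit to within $l_1$ error $1/192$ can be combined with Stockmeyer's theorem to approximate, in $\mathsf{BPP^{NP}}$, the probability of a distinguished output string. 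These output probabilities encode precisely the quantity whose average-case hardness the conjecture asserts, and the hiding construction of \cite{bremner2016average} spreads the worst case over typical instances; together these promote the approximation to a solution of a $\mathsf{\#P}$-complete problem, so that $\mathsf{\#P}\subseteq\mathsf{BPP^{NP}}$ and hence $\mathsf{P^{\#P}}=\mathsf{BPP^{NP}}$.

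The step where this proof departs from the original is to notice that the only appeal to the $\mathsf{NP}$ oracle in that derivation is a single approximate-counting call of exactly the form covered by Lemma \ref{lma:better stockmeyer} (or Lemma \ref{lma: postselected stockmeyer} when the relevant quantity is post-selected). Since those lemmas place the counting step in $\mathsf{FBPP^{NP}_{\parallel}}$, every oracle query issued by the derived algorithm can be made in one parallel, nonadaptive round. Consequently the conditional collapse can be sharpened from $\mathsf{P^{\#P}}=\mathsf{BPP^{NP}}$ to $\mathsf{P^{\#P}}=\mathsf{BPP^{NP}_{\parallel}}$, which is the hypothesis required by Theorem \ref{thm: main}.

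With $\mathsf{P^{\#P}}=\mathsf{BPP^{NP}_{\parallel}}$ in hand, Theorem \ref{thm: main} immediately yields $\mathsf{P^{\#P}}=\mathsf{ZPP^{NP}}$. Because $\mathsf{ZPP^{NP}}$ lies inside the second level of the polynomial hierarchy, whereas Toda's theorem places all of $\mathsf{PH}$ inside $\mathsf{P^{\#P}}$, the equality $\mathsf{PH}=\mathsf{P^{\#P}}=\mathsf{ZPP^{NP}}$ is exactly a collapse of the hierarchy to its second level. The main obstacle is the middle step: verifying that the Stockmeyer call really is the only use of the oracle in the proof of Theorem 1 of \cite{bremner2016average}, and that it is nonadaptive so that Lemma \ref{lma:better stockmeyer} applies without modification. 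This is a matter of auditing their argument rather than a genuinely new technical difficulty, which is precisely why — as with $\mathsf{BosonSampling}$ — we do not reproduce their proof in full.
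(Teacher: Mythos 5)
Your proposal is correct and follows essentially the same route as the paper: audit the Bremner--Montanaro--Shepherd argument to confirm the $\mathsf{NP}$ oracle is used only for Stockmeyer counting, replace that step by the parallel counting lemma (Lemma \ref{lma:better stockmeyer}) to sharpen the conditional collapse to $\mathsf{P^{\#P}}=\mathsf{BPP^{NP}_{\parallel}}$, and then invoke Theorem \ref{thm: main} together with Toda's theorem to conclude $\mathsf{PH}=\mathsf{P^{\#P}}=\mathsf{ZPP^{NP}}$. Your explicit mention of the hiding construction and the Toda step merely makes visible what the paper leaves implicit.
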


Finally, the same strategy can be applied to improve Morimae's result \cite{morimae2017hardness} on the hardness of the $\mathsf{DQC1}$ model for sampling \cite{fujii2015power} as it again depends on Stockmeyer counting. The necessary conjecture for Morimae's is an assumption directly about the average case hardness of the one clean qubit model.

\begin{theorem}[Improved one clean qubit hardness \cite{morimae2017hardness}]
    Assuming Morimae's conjecture, if there exists a classical algorithm which can output samples from any one clean qubit machine with at most 1/36 error in the $l_1$ norm then there is a $\mathsf{ZPP^{NP}}$ algorithm to solve any problem in $\mathsf{P^{\#P}}$. Hence the Polynomial Hierarchy would collapse to its second level.
\end{theorem}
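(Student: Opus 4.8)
The plan is to reuse the template applied above to $\mathsf{BosonSampling}$ and $\mathsf{IQP}$: upgrade Morimae's original $\mathsf{BPP^{NP}}$ collapse to a $\mathsf{BPP^{NP}_{\parallel}}$ collapse, and then invoke Theorem~\ref{thm: main}. First I would recall Morimae's argument \cite{morimae2017hardness}: assuming Morimae's conjecture, a classical sampler reproducing the output distribution of any one-clean-qubit circuit to within $1/36$ in $l_1$ norm can be combined with Stockmeyer's approximate-counting theorem to obtain a multiplicative estimate of a $\mathsf{DQC1}$ output probability, and under the conjecture this quantity (a normalised trace of the circuit's unitary) encodes a $\mathsf{\#P}$-complete problem. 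This yields $\mathsf{P^{\#P}=BPP^{NP}}$.

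The crucial observation is that the sole appearance of the $\mathsf{NP}$ oracle in this argument is exactly the Stockmeyer counting step, which has the form covered by Lemma~\ref{lma: postselected stockmeyer} (or the simpler Lemma~\ref{lma:better stockmeyer} when no post-selection is present): the probability being estimated is a ratio of sums of a poly-time computable Boolean function over computational basis states. Replacing the generic Stockmeyer invocation with Lemma~\ref{lma: postselected stockmeyer} shows that every $\mathsf{NP}$ query can be issued in a single parallel round, so Morimae's conclusion strengthens to $\mathsf{P^{\#P}=BPP^{NP}_{\parallel}}$.

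With this hypothesis in hand, Theorem~\ref{thm: main} immediately gives $\mathsf{P^{\#P}=ZPP^{NP}}$; since $\mathsf{PH\subseteq P^{\#P}}$ by Toda \cite{toda1991pp} and $\mathsf{ZPP^{NP}}$ lies in the second level of the hierarchy, the polynomial hierarchy collapses to its second level, as claimed.

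The main obstacle is the structural verification underlying the second paragraph: confirming that Morimae's proof confines all of its $\mathsf{NP}$ usage to one approximate-counting subroutine whose target has the (possibly post-selected) ratio form $\sum_x f(x)h(x)/\sum_x h(x)$ with $f,h$ poly-time computable, so that no adaptive refinement of oracle answers is needed. This is precisely the feature exploited for the analogous $\mathsf{BosonSampling}$ and $\mathsf{IQP}$ proofs, and once it is checked for the one-clean-qubit construction the parallelisation and the appeal to Theorem~\ref{thm: main} follow with no further work.
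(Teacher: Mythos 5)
Your proposal is correct and follows essentially the same route as the paper: observe that Morimae's argument uses its $\mathsf{NP}$ oracle only for a single Stockmeyer approximate-counting step, parallelise that step via Lemma~\ref{lma: postselected stockmeyer} (or Lemma~\ref{lma:better stockmeyer}) to upgrade the conclusion to $\mathsf{P^{\#P}=BPP^{NP}_{\parallel}}$, and then invoke Theorem~\ref{thm: main}. The structural caveat you flag---that one must check Morimae's proof really confines its oracle use to this one counting subroutine---is the same caveat the paper itself acknowledges rather than formally discharges.
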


It seems likely that other results will fit the structure we give here, although we provide only these three results.

We finish this section by noting that the above results are in $\mathsf{SampBQP}$, so classical impossibility on any of these tasks would imply $\mathsf{SampBQP\neq SampP}$, which would also imply a separation in the functional classes $\mathsf{FBQP \neq FBPP}$\cite{aaronson2014equivalence}.
\begin{theorem}
    If any of the sets of conjecture above hold and the polynomial hierarchy does not collapse to it's second level, then $\mathsf{FBQP\neq FBPP}$ and equivalently $\mathsf{SampBQP \neq SampP}$.
\end{theorem}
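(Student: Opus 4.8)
The plan is to argue by contraposition, chaining together the sampling-hardness theorems just established in this section with the sampling/functional equivalence of Aaronson \cite{aaronson2014equivalence}. First I would observe that each of the three experiments — boson sampling, $\mathsf{IQP}$ sampling, and $\mathsf{DQC1}$ sampling — is by construction a $\mathsf{SampBQP}$ problem: a uniform family of polynomial-size quantum circuits produces a close approximation to the target distribution in polynomial time, which is precisely the defining property of $\mathsf{SampBQP}$. This places each task firmly on the quantum side of the purported separation, so the only remaining work is to show the task escapes $\mathsf{SampP}$.

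Next I would invoke the contrapositive of the hardness theorems proved above. Fixing one set of conjectures (PGC and PACC for boson sampling, Conjecture 2 or 3 for $\mathsf{IQP}$, or Morimae's conjecture for $\mathsf{DQC1}$), those theorems state that a classical polynomial-time algorithm sampling the relevant distribution to within the stated constant $l_1$ error forces $\mathsf{P^{\#P}}=\mathsf{ZPP^{NP}}$, hence a collapse of the polynomial hierarchy to its second level. Since the hypothesis of the present theorem assumes the conjectures hold \emph{and} that the hierarchy does not collapse to the second level, no such classical sampler can exist for that fixed error. I must be careful about the error model here: membership in $\mathsf{SampP}$ requires a classical algorithm that, given $x$ and an error parameter $\varepsilon$ in time polynomial in $|x|$ and $1/\varepsilon$, samples within $\varepsilon$ of the target distribution, so a $\mathsf{SampP}$ algorithm can in particular meet any fixed constant error, including the specific constant ruled out by the hardness theorem. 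Therefore the chosen task is not in $\mathsf{SampP}$.

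Combining these two observations gives a sampling problem lying in $\mathsf{SampBQP}$ but not in $\mathsf{SampP}$, so $\mathsf{SampBQP}\neq\mathsf{SampP}$. To transfer this separation to the functional classes I would apply the equivalence of Aaronson \cite{aaronson2014equivalence}, which establishes $\mathsf{SampBQP}=\mathsf{SampP}$ if and only if $\mathsf{FBQP}=\mathsf{FBPP}$; taking contrapositives yields $\mathsf{SampBQP}\neq\mathsf{SampP}$ exactly when $\mathsf{FBQP}\neq\mathsf{FBPP}$, delivering both claimed inequalities simultaneously.

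The main obstacle I anticipate is not in the logical skeleton, which is a short chain of implications, but in the bookkeeping around the error conventions: I would want to verify that the constant-error hardness guaranteed by the preceding theorems is genuinely strong enough to obstruct the any-inverse-polynomial-error sampling demanded by the $\mathsf{SampP}$ definition, and that the additive $l_1$ and total-variation distances used across the sampling theorems and the definition of $\mathsf{SampP}$ are reconciled (they differ only by a factor of two and so do not affect membership). Once those conventions are aligned, the remainder is an immediate application of the cited equivalence.
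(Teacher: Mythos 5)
Your proposal is correct and follows essentially the same route as the paper: the paper's (very terse) justification is exactly this chain — the sampling tasks lie in $\mathsf{SampBQP}$, the contrapositive of the preceding hardness theorems excludes them from $\mathsf{SampP}$ under the stated conjectures and non-collapse assumption, and Aaronson's sampling/search equivalence transfers $\mathsf{SampBQP}\neq\mathsf{SampP}$ to $\mathsf{FBQP}\neq\mathsf{FBPP}$. Your added care about reconciling the fixed-constant-error hardness with the any-$\varepsilon$ requirement in the definition of $\mathsf{SampP}$ is a worthwhile detail the paper leaves implicit, but it does not change the argument.
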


\section{Conclusion and Open Problems}
This paper has shown that if $\mathsf{P^{\#P}=BPP^{NP}_{\parallel}}$ then the polynomial hierarchy collapses to its second level. We have connected this result to approximate/exact counting and shown that it improves a number of results demonstrating the separation of quantum and classical computing.
The natural next research question is `how low can we go?'. Fujii et al \cite{fujii2015power} extended the result for the multiplicative error case to $\mathsf{AM \cap coAM}$, perhaps this could be achieved. We have not used the fact that the $\mathsf{ZPP^{NP}}$ algorithm likely only needs two rounds of oracle calls. This could be an avenue to collapsing the hierarchy further.

This hierarchy collapse strengthens claims of quantum supremacy but as these claims also rest on a number of additional assumptions (e.g. permanents-of-Gaussians conjectures), diminishing, removing or proving the other assumptions remain one of the central challenges of showing quantum supremacy and proving $\mathsf{SampBQP\neq SampP}$. Alternatively further work may reveal one of these assumptions to fall through, such as with XQUATH \cite{aharonov2023polynomial}.

Our results offer other, more direct, extensions. Of particular interest is whether our main theorem relativises, like previous supremacy results did \cite{aaronson2011computational}. Avoiding using the checkability of $\mathsf{\#P}$ may be key to proving relativisation as this is the only step of our proof that did not relativise.

Another open question is how Theorem \ref{thm: main} may be used for other non-quantum purposes, perhaps where approximate and exact counting are being compared.
Alternatively, a research line we did not pursue is extending theorem \ref{thm: main}. Extensions could be to other checkable rsr languages, perhaps $\mathsf{PSPACE}$ or $\mathsf{EXP}$ (although these may only be adaptively-rsr \cite{feigenbaum1990random}), or to showing the equality holds for other elements of the polynomial hierarchy ($\mathsf{P^{\# P}=BPP^{\Sigma_i}_{\parallel}\stackrel{?}{\implies} P^{\# P}=ZPP^{\Sigma_i}} $).

\section*{Acknowledgements}
SCM thanks Jan H. Kirchner for their insightful comments.
VD and SCM acknowledge the support by the project NEASQC funded from the European Union’s Horizon 2020 research and innovation programme (grant agreement No 951821). VD and SCM also acknowledge partial funding by an unrestricted gift from Google Quantum AI.
VD was supported by the European Union’s Horizon Europe program through the ERC CoG BeMAIQuantum (Grant No. 101124342) and the Dutch National Growth Fund (NGF), as part of the Quantum Delta NL program.
SA was supported by a Vannevar Bush Fellowship from the US Department of Defense, the Berkeley NSF-QLCI CIQC Center, and a Simons Investigator Award.
\bibliographystyle{plain}
\bibliography{biblo.bib}

\end{document}